\documentclass[conference]{IEEEtran}
\IEEEoverridecommandlockouts
\usepackage{graphicx}
\usepackage{soul}
\usepackage{array}
\usepackage{amsmath}
\usepackage{float}
\usepackage{multirow}
\usepackage{array}
\usepackage{cancel}
\usepackage{tabu}
\usepackage{amsthm}
\usepackage{epsfig}
\usepackage{epstopdf}
\usepackage{epsf}
\usepackage{color}
\usepackage[english]{babel}

\theoremstyle{definition}
\newtheorem{theorem}{Theorem}
\newtheorem{lemma}{Lemma}
\newtheorem{definition}{Definition}

\usepackage{amssymb}
\usepackage{cite}
\usepackage{amsmath,amssymb,amsfonts}
\usepackage{graphicx}
\usepackage{textcomp}
\usepackage{xcolor}
\usepackage{algorithm}
\usepackage[noend]{algpseudocode}
\usepackage[T1]{fontenc}
\usepackage{graphicx}
\usepackage{lipsum}
\usepackage{afterpage}
\usepackage{enumitem}
\usepackage{lipsum}
\usepackage{mathtools}
\usepackage{cuted}

\usepackage{amsthm}
\theoremstyle{remark}

\usepackage[caption=false,font=footnotesize]{subfig}

\usepackage{algorithm}
\usepackage[noend]{algpseudocode}
\usepackage[hidelinks]{hyperref}

\usepackage[top=0.73in, bottom=1.1in, left=0.63in, right=0.63in]{geometry}

\usepackage{comment}
\usepackage{optidef}

\begin{document}

\title{Cross-Domain Lossy Compression via Constrained Minimum Entropy Coupling}
	
\author{\IEEEauthorblockN{Nam Nguyen}
\IEEEauthorblockA{School of Electrical Engineering\\ and Computer Science\\
Oregon State University\\
Corvallis, OR, 97331\\
Email: nguynam4@oregonstate.edu}
\and
\IEEEauthorblockN{Thinh Nguyen}
\IEEEauthorblockA{School of Electrical Engineering\\ and Computer Science\\
Oregon State University\\
Corvallis, OR, 97331 \\
Email: thinhq@eecs.oregonstate.edu}
\and
\IEEEauthorblockN{Bella Bose}
\IEEEauthorblockA{School of Electrical Engineering\\ and Computer Science\\
Oregon State University\\
Corvallis, OR, 97331 \\
Email: bella.bose@oregonstate.edu}
}

\author{\IEEEauthorblockN{$\textnormal{Nam Nguyen}$, $\textnormal{Hassan Tavakoli}$, $\textnormal{An Vuong}$, $\textnormal{Thinh Nguyen}$, and $\textnormal{Bella Bose}$ \thanks{This work was supported by the National Science Foundation under Grant No. CCF:SHF:2417898.}}
\IEEEauthorblockA{School of Electrical Engineering and Computer Science, Oregon State University, Corvallis, OR, 97331\\}
Emails: \{nguynam4, tavakolh, vuonga2, thinhq, bella.bose\}@oregonstate.edu}
	
\maketitle

\begin{abstract}
This paper studies cross-domain lossy compression through the lens of minimum entropy coupling (MEC) with rate and classification constraints. In this setting, an encoder observes samples from a degraded source domain, while the decoder is required to generate outputs following a prescribed target distribution and to preserve information relevant to a downstream classification task. Motivated by logarithmic-loss distortion, we adopt an information-based objective that maximizes the coupling strength between the source and reconstruction, rather than minimizing a sample-wise distortion. Under common randomness, we formulate a rate-constrained MEC problem (MEC-B) and show that the intermediate representation can be removed without loss of optimality, yielding an equivalent deterministic coupling formulation. For Bernoulli sources, closed-form expressions are derived with and without classification constraints. In addition, we implement a neural restoration framework using
quantization, entropy modeling, distribution matching, and classification regularization. Experiments on MNIST super-resolution and SVHN denoising show that increasing the available rate improves classification accuracy and yields more informative reconstructions.
\end{abstract}

\renewcommand\IEEEkeywordsname{Keywords}
\begin{IEEEkeywords}
Cross-domain lossy compression, minimum entropy coupling, image compression, image restoration, deep learning 
\end{IEEEkeywords}

\section{Introduction}

Rate-distortion theory provides the classical foundation for lossy compression
by characterizing the optimal tradeoff between coding rate and reconstruction
distortion~\cite{cover1999elements}. Traditional compression systems are often
evaluated using pointwise fidelity measures, such as mean squared error (MSE),
peak signal-to-noise ratio (PSNR), and structural similarity index
(SSIM)~\cite{wang2004image}. However, in many modern generative, restoration,
and task-oriented compression settings, pointwise distortion alone is not
sufficient. For instance, the encoder may observe degraded samples
\(X\sim p_X\), such as noisy or low-resolution images, while the decoder is
required to generate outputs that follow a different target distribution
\(p_Y\), such as clean or high-resolution images
\cite{liu2022lossy,nguyen2026crossdomain,ebrahimi2024mecb}. In addition, the
reconstruction should retain semantic information that is useful for downstream
tasks~\cite{blau2018perception,blau2019rethinking,agustsson2019generative}.

These challenges have led to several extensions of classical rate-distortion
theory. The rate-distortion-perception (RDP)
framework~\cite{blau2018perception,blau2019rethinking} studies the tradeoff
among rate, distortion, and perceptual quality, where perception is measured by
a divergence between the source and reconstruction distributions. Generative
compression methods based on generative adversarial networks (GANs) and
distribution-matching losses have shown that visually realistic
reconstructions can be produced at low rates, although often with higher
distortion~\cite{goodfellow2014generative,arjovsky2017wasserstein,
tschannen2018deep,larsen2016autoencoding}. More recently, task-aware
compression has incorporated classification performance into the compression
objective. In particular, rate-distortion-classification (RDC)
~\cite{CDP,Zhang2023,Wang2024,nguyen2026rdc,NamNguyen2025} and
rate-distortion-perception-classification (RDPC)
~\cite{nguyen2026crossdomain} characterize tradeoffs among rate, fidelity,
distributional quality, and classification performance. These frameworks are
mainly distortion-based, whereas this paper considers an information-based
criterion.

Logarithmic loss offers a natural information-theoretic alternative to
sample-wise distortion. It is widely used in prediction and statistical
learning, and has also been studied in source coding, including multiterminal
settings~\cite{courtade2011multiterminal,courtade2013multiterminal}. In soft
reconstruction, where the decoder outputs a probability distribution rather
than a deterministic estimate, expected log-loss is directly related to
conditional entropy and mutual information. Single-shot lossy source coding
under logarithmic loss was further investigated by Shkel and
Verd{\'u}~\cite{shkel2017single}. These connections motivate measuring
reconstruction quality through the mutual information preserved between the
source and reconstruction.

Minimum entropy coupling (MEC) gives another information-theoretic view of
distribution-constrained reconstruction. Given fixed marginals, MEC seeks a
joint distribution with minimum joint entropy. Since the marginal entropies are
fixed, this is equivalent to maximizing the mutual information between the
coupled variables. MEC has been studied
in~\cite{vidyasagar2012minimum,cicalese2019minimum} and is NP-hard in
general~\cite{kovacevic2015entropy,vidyasagar2012minimum}. This difficulty has
motivated approximation methods, including greedy algorithms for causal
inference~\cite{kocaoglu2017causal,kocaoglu2017entropic} and additive
approximation guarantees within a constant number of bits of the
optimum~\cite{compton2023minimum}.

The minimum entropy coupling with bottleneck (MEC-B) problem extends MEC by
introducing a rate-limited intermediate representation between the source and
the reconstruction~\cite{ebrahimi2024mecb}. This formulation is well-suited to
distribution-constrained lossy compression, where the reconstruction must
match a prescribed target marginal while being generated from a rate-limited
representation. Related work includes lossy compression with a distinct source
and reconstruction distributions via entropy-constrained optimal
transport~\cite{liu2022lossy}, as well as classification-constrained
cross-domain compression~\cite{nguyen2026crossdomain}. Unlike these
distortion-based approaches, this paper develops a log-loss/information-based
MEC framework for cross-domain lossy compression under rate and classification
constraints.

In this paper, we study cross-domain lossy compression through rate- and
classification-constrained MEC. Given degraded samples from \(p_X\) and desired
reconstructions from \(p_Y\), the goal is to maximize \(I(X;Y)\) while
enforcing the target marginal, a rate constraint, and a task constraint on the
uncertainty of a label \(S\) given the reconstruction \(Y\). Under common
randomness, we show that the discrete problem admits an equivalent
deterministic coupling representation and derive closed-form Bernoulli
characterizations with and without classification constraints. We further
implement a neural restoration framework using quantization, entropy modeling,
distribution matching, and classification regularization. Experiments on MNIST
super-resolution and SVHN denoising show that increasing the available rate
improves classification accuracy and yields more informative reconstructions.

\section{Problem Formulation}

We consider a lossy compression system described by the Markov chain $X \leftrightarrow Z \leftrightarrow Y,$ where the source $X \sim p_X$ is encoded into a representation $Z$ via a stochastic encoder $p_{Z|X}$ and reconstructed as $Y$ through a stochastic decoder $p_{Y|Z}$. The encoder is subject to a rate constraint $H(Z) \leq R$.

Instead of measuring fidelity using a sample-wise distortion function $d(x,y)$ \cite{cover1999elements}, we adopt a distributional criterion based on logarithmic loss. This leads to maximizing the mutual information $I(X;Y)$, which is equivalent to minimizing $H(X|Y)$. The log-loss distortion measure has been studied in the context of rate-distortion theory in \cite{courtade2011multiterminal,courtade2013multiterminal}, and is particularly suitable when the decoder produces soft (distribution-valued) reconstructions \cite{shkel2017single}.


To ensure meaningful reconstructions, we impose that the output follows a prescribed marginal distribution $p_Y$. This avoids degenerate solutions (e.g., identity mappings) and aligns the reconstruction with a desired target distribution. Following \cite{ebrahimi2024mecb}, we define the rate-constrained minimum entropy coupling problem (MEC-B):
\begin{align}
\label{eq:mecb_rewrite}
\mathcal{I}_{\mathrm{MEC\text{-}B}}(p_X,p_Y,R)
&=
\max_{p_{Z|X},\,p_{Y|Z}} I(X;Y) \\
\text{s.t.} \quad
& X \leftrightarrow Z \leftrightarrow Y, \nonumber \\
& H(Z) \leq R, \nonumber \\
& P_X = p_X, \nonumber \\
& P_Y = p_Y. \nonumber
\end{align}  
The constraint $P_Y = p_Y$ enforces that the reconstruction matches a desired target distribution, which is essential in applications such as image restoration and generative compression. The objective $I(X;Y)$ quantifies the strength of the coupling between $X$ and $Y$, capturing how much information about the source is preserved under the rate constraint.

\begin{figure}[]
\centering
\includegraphics[width=0.47\textwidth]{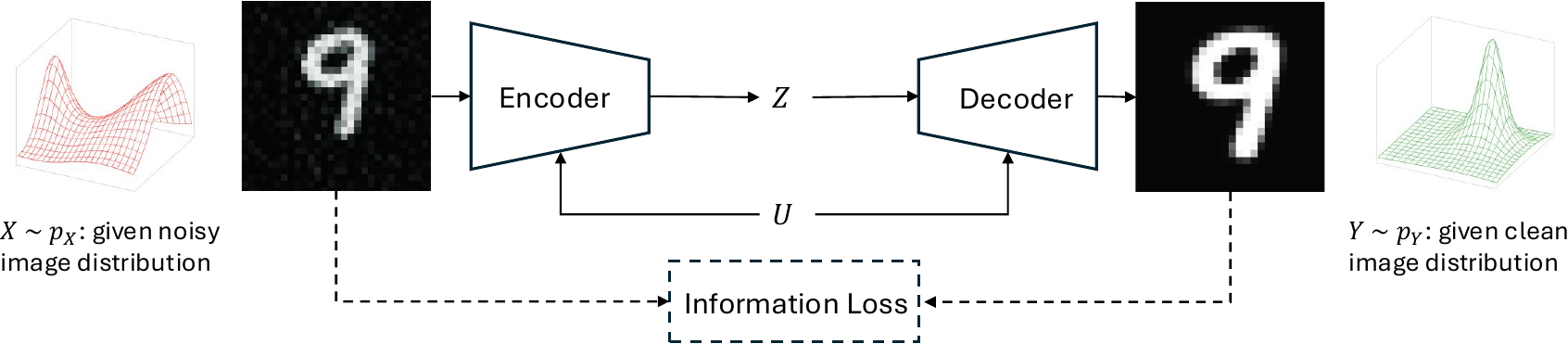}
\caption{System model: a noisy input $X \sim p_X$ is restored as $Y \sim p_Y$.}
\label{fig:System_Model_Cross_Domain_MEC}
\end{figure}

\section{MEC-B with Common Randomness}

\subsection{Under Rate Constraint}

We extend the formulation in \eqref{eq:mecb_rewrite} by incorporating shared randomness between the encoder and decoder. Let $U$ denote a random variable independent of $X$, available to both terminals as shown in Figure \ref{fig:System_Model_Cross_Domain_MEC}.

\begin{definition}
\label{def:mecb_randomness}
Define $M(p_X, p_Y)$ as the set of joint distributions $p_{U,X,Z,Y}$ with marginals $p_X, p_Y$ that factorize as
$p_{U,X,Z,Y} = p_U \, p_X \, p_{Z|X,U} \, p_{Y|Z,U}$, where $U$ is the shared common randomness. The minimum entropy coupling problem with rate constraint and shared randomness (MEC-B-R) is given by
\begin{align}
\label{prob:mecb_randomness}
   \mathcal{I}_\text{MEC-B-R}(p_X, p_Y, R) 
    &= \max_{p_{U,X,Z,Y} \in M(p_X, p_Y)} \, I(X;Y)  \\
    \text{s.t.} \quad
    & H(Z|U) \leq R. \nonumber 
\end{align}
\end{definition}
Conditioned on the shared randomness $U$, the encoder maps $X$ to a representation $Z \sim p_{Z|X,U}$, which can be compressed at an average rate $H(Z|U)$. The decoder, given $(Z,U)$, generates the reconstruction $Y \sim p_{Y|Z,U}$. By standard source coding arguments, the constraint $H(Z|U) \leq R$ characterizes the achievable rate.

\begin{figure}[ ]
\centering
\includegraphics[width=0.34\textwidth]{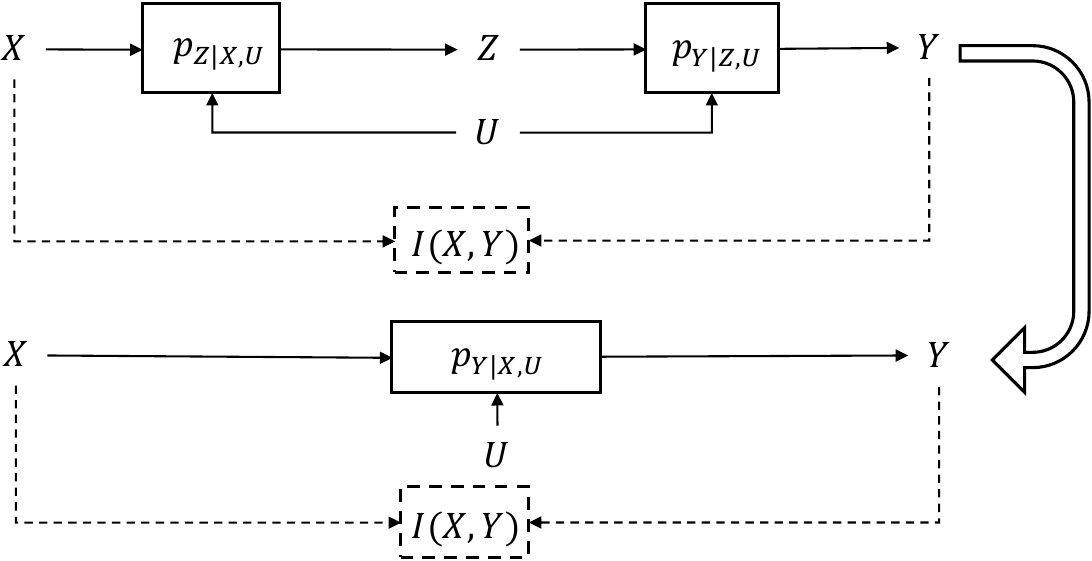}
\caption{System architecture corresponding to Theorem~\ref{thm:oneshot_random}. The encoder maps $(X,U)$ to $Y$, and the decoder reconstructs $Y$ using $U$.}
\label{fig:Architectures_MEC}
\end{figure}

The optimization in \eqref{prob:mecb_randomness} is carried out jointly over the distribution of $U$ and the stochastic mappings $p_{Z|X,U}$ and $p_{Y|Z,U}$. The following result provides an equivalent and more compact representation.

\begin{theorem}
\label{thm:oneshot_random} 
Define $Q(p_X, p_Y)$ as the set of joint distributions $p_{U,X,Y}$ with marginals $p_X, p_Y$ that factorize as $p_{U,X,Y} = p_U \, p_X \, p_{Y|X,U}$. Then, the minimum entropy coupling problem in Definition \ref{def:mecb_randomness} admits the representation
\begin{align}
\label{prob:mecb_randomness_deterministic_convert}
    \mathcal{I}_\text{MEC-B-R}(p_X, p_Y, R) 
    &= \max_{p_{U,X,Y} \in Q(p_X, p_Y)} \, I(X;Y)  \\
    \textnormal{s.t.} \quad 
    & H(Y|X,U) = 0,  \nonumber\\
    & I(X;U) = 0, \nonumber \\
    & H(Y|U) \leq R. \nonumber 
\end{align}
\end{theorem}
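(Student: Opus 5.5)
We prove the equality by showing each optimal value is at least the other; call the right-hand side of \eqref{prob:mecb_randomness_deterministic_convert} $\mathcal{I}^\star$.

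\emph{Step 1 ($\mathcal{I}^\star \le \mathcal{I}_\text{MEC-B-R}$).} Take any feasible $p_{U,X,Y}\in Q(p_X,p_Y)$. Then $Y=f(X,U)$ for some deterministic $f$, $U$ is independent of $X$, and $H(Y|U)\le R$. Set $Z:=Y$, so $p_{Z|X,U}$ is the point mass at $f(X,U)$ and $p_{Y|Z,U}$ is the identity. The joint law factorizes as $p_U\,p_X\,p_{Z|X,U}\,p_{Y|Z,U}$, and the marginals are still $p_X,p_Y$. The rate is $H(Z|U)=H(Y|U)\le R$, and $I(X;Y)$ is unchanged. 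Hence this construction is feasible for \eqref{prob:mecb_randomness} with the same objective.

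\emph{Step 2 ($\mathcal{I}_\text{MEC-B-R}\le \mathcal{I}^\star$).} Take any feasible $p_{U,X,Z,Y}\in M(p_X,p_Y)$ with $H(Z|U)\le R$. We make all stochasticity deterministic by enlarging the common randomness.

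\begin{enumerate}[label=(\roman*)]
\item Represent the encoder as $Z=g(X,U,V_1)$ and the decoder as $Y=h(Z,U,V_2)$. Here $V_1,V_2$ are uniform, and $U,V_1,V_2,X$ are mutually independent; this is a functional (Skorokhod-type) representation. In the discrete setting we can take $V_1=(V_1^{x,u})_{x,u}$, a vector of independent variables with $V_1^{x,u}\sim p_{Z|X=x,U=u}$, and similarly for $V_2$.
\item Define $U':=(U,V_1,V_2)$. Then $U'$ is independent of $X$, and $Y=h(g(X,U,V_1),U,V_2)$ is a deterministic function of $(X,U')$. So $H(Y|X,U')=0$ and $I(X;U')=0$.
\item The law of $(X,Y)$ is unchanged, so the marginals $p_X,p_Y$ and the objective $I(X;Y)$ are preserved.
\end{enumerate}

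\emph{Step 3 (rate).} It remains to show $H(Y|U')\le R$. We have
\[
H(Y|U,V_1,V_2)\le H(Y,Z|U,V_1,V_2)=H(Z|U,V_1,V_2)+H(Y|Z,U,V_1,V_2)=H(Z|U,V_1,V_2),
\]
where the last equality uses that $Y$ is a function of $(Z,U,V_2)$. Because $V_2$ is independent of $(X,U,V_1)$ and $Z$ depends only on $(X,U,V_1)$, we get $H(Z|U,V_1,V_2)=H(Z|U,V_1)$.

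Conditioning on $V_1$ introduces the only real subtlety. In general $H(Z|U,V_1)\le H(Z|U)$ holds, since conditioning reduces entropy. That gives $H(Y|U')\le H(Z|U)\le R$, as required.

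I expect the main care point to be confirming this direction and the independence claims. We must check that $V_2$ is independent of $Z$ given $(U,V_1)$, and that $U'$ remains independent of $X$; both follow because $V_1,V_2$ are generated independently of $X$ and of each other.

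Combining Steps 1–3 gives equality of the two optimal values. If the supremum on either side is not attained, the same argument applies to near-optimal feasible points.
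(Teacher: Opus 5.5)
Your proposal is correct and follows essentially the same route as the paper. You set $Z:=Y$ for the lower bound. For the upper bound you use the functional representation lemma to absorb the encoder and decoder randomness into $U'=(U,V_1,V_2)$, with the rate controlled by $H(Y|U')\le H(Z|U')\le H(Z|U)\le R$. Your intermediate equality $H(Z|U,V_1,V_2)=H(Z|U,V_1)$ is correct but not needed: conditioning reduces entropy gives $H(Z|U')\le H(Z|U)$ directly, which is how the paper argues.
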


\begin{proof}
The result follows by adapting Theorem~3 in \cite{liu2022lossy}. 
For completeness, a detailed proof is provided in Appendix~\ref{app:oneshot_random_proof}.
\end{proof}

Theorem~\ref{thm:oneshot_random} shows that the intermediate representation $Z$ can be removed without loss of optimality. In particular, the reconstruction $Y$ can be generated directly from $(X,U)$, with the constraint $H(Y|X,U)=0$ ensuring that the mapping is deterministic given $U$.

Under this equivalent formulation, the encoder produces $Y$ from $(X,U)$ and compresses it losslessly at rate $H(Y|U)$, while the decoder reconstructs $Y$ using the shared randomness $U$. Thus, the problem reduces to designing a conditional distribution $p_{Y|X,U}$ subject to a rate constraint on $Y$.

\subsubsection{Bernoulli Case Expression}
We now specialize the problem in \eqref{prob:mecb_randomness_deterministic_convert} to the Bernoulli setting.
\begin{theorem}
\label{Bernoulli_radom_IRC}
Let \(X \sim \mathrm{Bern}(q_X)\) and \(Y \sim \mathrm{Bern}(q_Y)\) with
\(0 < q_X, q_Y \leq \tfrac{1}{2}\). Under common randomness, the
rate-constrained Bernoulli MEC-B problem is given by
\begin{align*}
\mathcal{I}_\text{MEC-B-R}^{(B)}(q_X, q_Y, R)
&=
H_b(q_Y) - (1-q_X)H_b\bigl(q_Y-q_X\alpha\bigr) \\
&-
q_X H_b\bigl(q_Y+(1-q_X)\alpha\bigr),
\end{align*}
where
\begin{align*}
\alpha
=
\begin{cases}
\min\left\{\dfrac{R}{H_b(q_X)},\, \dfrac{q_Y}{q_X}\right\},
& q_Y \le q_X,\\[2ex]
\min\left\{\dfrac{R}{H_b(q_X)},\, \dfrac{1-q_Y}{1-q_X}\right\},
& q_Y \ge q_X.
\end{cases}
\end{align*}
and $H_b(t)=-t\log t-(1-t)\log(1-t)$ denotes the binary entropy function.
\end{theorem}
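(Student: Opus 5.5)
The plan is to prove achievability and the converse separately, both working in the equivalent formulation of Theorem~\ref{thm:oneshot_random}. There, conditioned on $U=u$, the reconstruction is a deterministic function $Y=f_u(X)$, and the rate is $H(Y|U)$.

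\textbf{Structure of the channels.} Since $X$ is binary, each $f_u$ is one of four maps:
\begin{itemize}[leftmargin=*]
\item the constant $0$ or the constant $1$, for which $H(Y|U=u)=0$;
\item the identity or the complement, for which $H(Y|U=u)=H_b(q_X)$.
\end{itemize}
Let $\beta=\Pr\{f_U \text{ non-constant}\}$. Then $H(Y|U)=\beta H_b(q_X)$, so the rate constraint reads exactly $\beta\le R/H_b(q_X)$.

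Write the induced channel as $a=\Pr(Y=1|X=1)$ and $b=\Pr(Y=1|X=0)$, and set $d=a-b$. Each constant map contributes $0$ to $d$, the identity contributes $+1$, and the complement contributes $-1$. Hence every feasible channel satisfies $|d|\le\beta$.

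The marginal constraint $P_Y=p_Y$ gives $b=q_Y-q_Xd$ and $a=q_Y+(1-q_X)d$. The channel is therefore determined by the single parameter $d$, and $I(X;Y)$ becomes the one-variable function
\begin{align*}
g(d)=H_b(q_Y)-(1-q_X)H_b(q_Y-q_Xd)-q_XH_b\bigl(q_Y+(1-q_X)d\bigr).
\end{align*}
This is exactly the claimed expression with $d=\alpha$.

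\textbf{Achievability.} Take $U$ to select, with probability $\alpha$, the identity map $Y=X$. With probability $1-\alpha$ it selects a constant drawn (also via $U$) as $\mathrm{Bern}(r)$, where $r=(q_Y-\alpha q_X)/(1-\alpha)$. Then:
\begin{itemize}[leftmargin=*]
\item the marginal is $\alpha q_X+(1-\alpha)r=q_Y$;
\item the rate is $H(Y|U)=\alpha H_b(q_X)\le R$;
\item the channel has $d=\alpha$, so $I(X;Y)=g(\alpha)$.
\end{itemize}
The requirement $r\in[0,1]$ gives the cap $\alpha\le q_Y/q_X$ when $q_Y\le q_X$, and the cap $\alpha\le(1-q_Y)/(1-q_X)$ when $q_Y\ge q_X$. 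This explains the two cases in the definition of $\alpha$.

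\textbf{Converse.} The first step is convexity. Mutual information is convex in the channel for fixed $p_X$, and the channel is affine in $d$, so $g$ is convex. Moreover $g(0)=0$ and $g\ge 0$. Hence $g$ is nondecreasing in $|d|$ on each side of $0$.

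The feasible set is an interval $[-d_-,d_+]$, determined by $|d|\le\beta\le R/H_b(q_X)$ together with $a,b\in[0,1]$. On the positive side, $d_+$ equals the stated $\alpha$. On the negative side, the constraints are $d\ge -q_Y/(1-q_X)$ and $d\ge-(1-q_Y)/q_X$. By convexity, the maximum of $g$ is $\max\{g(d_+),g(-d_-)\}$.

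\textbf{Main obstacle.} It remains to show $g(-d_-)\le g(d_+)$. This requires both a range comparison and a pointwise comparison.
\begin{itemize}[leftmargin=*]
\item \emph{Range.} One needs $d_-\le d_+$. Using $q_X,q_Y\le\tfrac12$, for instance, $q_Y/(1-q_X)\le q_Y/q_X$ in the case $q_Y\le q_X$; the other case is analogous.
\item \emph{Pointwise.} One needs $g(-t)\le g(t)$ for all $0\le t\le d_-$.
\end{itemize}
For the pointwise comparison, I would first observe that negative $d$ corresponds to coupling $X$ with $1-Y\sim\mathrm{Bern}(1-q_Y)$. The claim then becomes the statement that, for sources with both parameters $\le\tfrac12$, aligning the minority symbols yields more information than anti-aligning them at the same correlation level. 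I would try to prove this by differentiating $g(t)-g(-t)$ in $t$ and using the monotonicity of $H_b'$ together with $q_X,q_Y\le\tfrac12$. This is the most delicate computation in the proof.
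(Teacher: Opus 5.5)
Your setup is correct: the reduction to the four maps, the bound $|d|\le\beta$, the convexity of $g$ with its minimum at $d=0$, the feasible interval $[-d_-,d_+]$ with $d_+=\alpha$ and $d_-=\min\{R/H_b(q_X),\,q_Y/(1-q_X)\}\le d_+$, and the achievability scheme. You are also right that the converse hinges on comparing $g(d_+)$ with $g(-d_-)$. The paper's proof never makes this comparison. Its lemma replaces $(p_U(1),p_U(2))$ by $(p_U(1)-p_U(2),0)$, which gives a negative probability whenever $p_U(2)>p_U(1)$, so the branch $d<0$ is silently dropped.

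The step that fails is your pointwise claim $g(-t)\le g(t)$: the opposite inequality holds. Let $\psi(s)=H_b(q_Y+s)-H_b(q_Y-s)$ on $[0,q_Y]$. Then for $t>0$,
\[
g(t)-g(-t)=(1-q_X)\,\psi(q_Xt)-q_X\,\psi\bigl((1-q_X)t\bigr)=q_X(1-q_X)\,t\left[\frac{\psi(q_Xt)}{q_Xt}-\frac{\psi\bigl((1-q_X)t\bigr)}{(1-q_X)t}\right].
\]
Here $\psi''(s)=H_b''(q_Y+s)-H_b''(q_Y-s)\ge 0$. This holds because $H_b''(x)$ is a negative multiple of $1/\bigl(x(1-x)\bigr)$ and $(q_Y+s)(1-q_Y-s)-(q_Y-s)(1-q_Y+s)=2s(1-2q_Y)\ge 0$. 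Since $\psi(0)=0$ and $\psi$ is convex, $\psi(s)/s$ is nondecreasing. With $q_Xt\le(1-q_X)t$ this gives $g(t)\le g(-t)$ for $0\le t\le q_Y/(1-q_X)$, strictly when $q_X,q_Y<\tfrac12$. So at equal $|d|$ (equal rate), anti-aligning the minority symbols gives \emph{more} information, and the derivative computation you plan will come out with the wrong sign.

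Hence no argument can close your ``main obstacle'': the statement itself is false in the rate-limited regime. Suppose $q_X,q_Y<\tfrac12$ and $0<R\le H_b(q_X)\,q_Y/(1-q_X)$. Then $d_+=d_-=R/H_b(q_X)$, and using the complement map with probability $R/H_b(q_X)$ strictly beats the claimed value.

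Concretely, take $q_X=q_Y=0.2$ and $R=0.1\,H_b(0.2)$. The formula gives $g(0.1)\approx 0.0068$ bits. Now let $U$ select $Y=1-X$ with probability $0.1$, $Y\equiv 1$ with probability $0.12$, and $Y\equiv 0$ with probability $0.78$. This satisfies $P(Y=1)=0.2$ and $H(Y|U)=R$, and it gives $I(X;Y)=g(-0.1)\approx 0.0079$ bits.

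What your convexity argument does establish is that the optimum equals $\max\{g(\alpha),\,g(-d_-)\}$. This agrees with the stated formula exactly when the positive endpoint wins. One such case is $R\ge H_b(q_X)$: there the two endpoints are the comonotone and countermonotone couplings, and the comonotone one is better. So the theorem has to be restated rather than proved as written.
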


\begin{proof}
The proof is shown in Appendix~\ref{app:Bernoulli_radom_IRC_proof}.
\end{proof}

\begin{figure}[ ]
\centering
\includegraphics[width=0.32\textwidth]{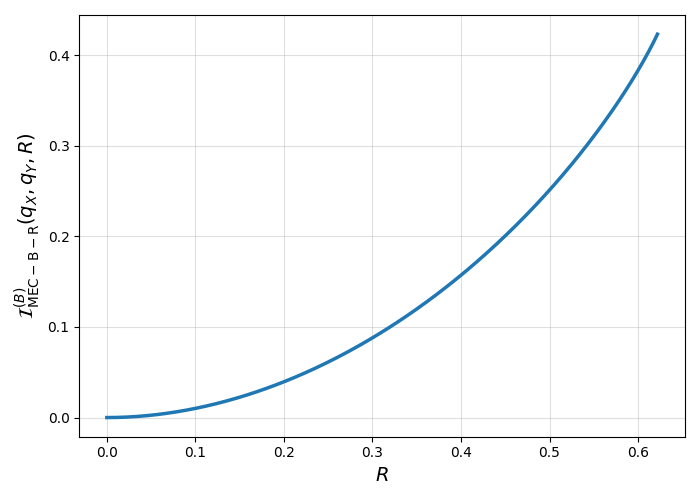}
\caption{$\mathcal{I}_\text{MEC-B-R}^{(B)}(q_X, q_Y, R)$ versus $R$ for $q_X = 0.2$ and $q_Y = 0.3$.}
\label{fig:I_R_Bernoulli}
\end{figure}

Figure~\ref{fig:I_R_Bernoulli} shows 
\(\mathcal{I}_\text{MEC-B-R}^{(B)}(q_X, q_Y, R)\) versus \(R\) for
\(q_X = 0.2\) and \(q_Y = 0.3\). The curve increases with \(R\), showing that
a larger bottleneck rate allows \(Y\) to preserve more information about \(X\)
while satisfying the prescribed marginal constraint. 

\begin{figure}[ ]
\centering
\includegraphics[width=0.47\textwidth]{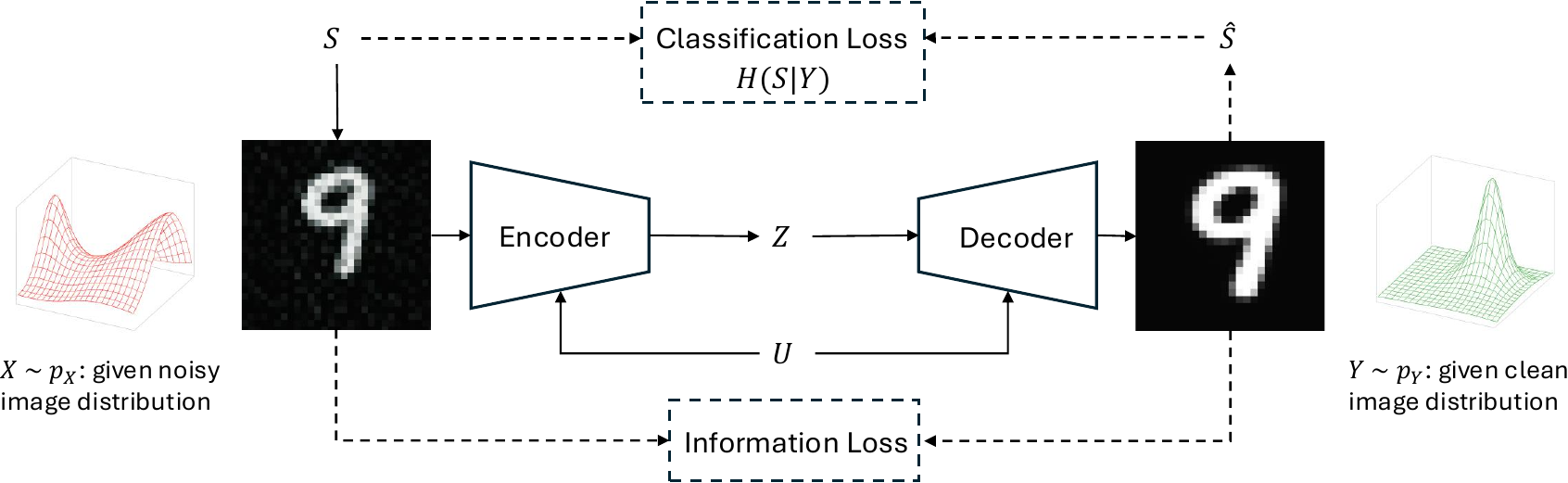}
\caption{System model: a noisy input $X \sim p_X$ is restored as $Y \sim p_Y$, while supporting classification with label $S$.}
\label{fig:System_Model_Cross_Domain_MEC_Classifiation}
\end{figure}

\begin{figure*}
\begin{align}
\label{eq:MEC_B_R_C}
&\mathcal{I}_\text{MEC-B-R-C}^{(B)}(q_X, q_Y, R, C)  \\ 
& = \begin{cases} 
    \displaystyle H_b(q_Y) - (1-q_X)H_b \left[q_Y-q_X\frac{R}{H_b(q_X)} \right] 
    - q_XH_b\left[q_Y+(1-q_X)\frac{R}{H_b(q_X)}\right], \\ \hspace{1cm} \frac{R (H_b(q_{S_1}) - H_b(m))} {H_b(q_X)}  + H_b(m) < C < H_b(m) \text{ and } R \leq H_b(q_X) \min\left\{
    \frac{q_Y}{q_X},
    \frac{1-q_Y}{1-q_X}
    \right\} \\
    \displaystyle H_b(q_Y) - (1-q_X)H_b\left[q_Y + q_X\frac{R}{H_b(q_X)}\right] 
    - q_XH_b\left[q_Y - (1-q_X)\frac{R}{H_b(q_X)}\right],\\ 
    \hspace{1cm} \frac{R (H_b(q_{S_1}) - H_b(m))}{H_b(q_X)}  + H_b(m) < C < H_b(m) \text{ and } R \leq H_b(q_X) \min\left\{
    \frac{q_Y}{1-q_X},
    \frac{1-q_Y}{q_X}
    \right\} \\
    \displaystyle H_b(q_Y) - (1-q_X)H_b\left[q_Y-q_X\frac{C - H_b(m)}{H_b(q_{S_1}) - H_b(m)}\right] 
    - q_XH_b\left[q_Y+(1-q_X)\frac{C - H_b(m)}{H_b(q_{S_1}) - H_b(m)}\right], \\
    \hspace{1cm} H_b(q_{s_1}) \leq C \leq \frac{R (H_b(q_{S_1}) - H_b(m))}{H_b(q_X)} + H_b(m)  \text{ and } C \geq H_b(m) - (H_b(m) - H_b(q_{S_1}))\min\left\{
    \frac{q_Y}{q_X},
    \frac{1-q_Y}{1-q_X}
    \right\} \\
    \displaystyle H_b(q_Y) - (1-q_X)H_b\left[q_Y + q_X\frac{C - H_b(m)}{H_b(q_{S_1}) - H_b(m)}\right] 
    - q_XH_b\left[q_Y - (1-q_X)\frac{C - H_b(m)}{H_b(q_{S_1}) - H_b(m)}\right], \nonumber\\
    \hspace{1cm} H_b(q_{s_1}) \leq C \leq \frac{R (H_b(q_{S_1}) - H_b(m))}{H_b(q_X)} + H_b(m) \text{ and } C \geq H_b(m) - (H_b(m) - H_b(q_{S_1}))\min\left\{
    \frac{q_Y}{1-q_X},
    \frac{1-q_Y}{q_X}
    \right\}\\
    \displaystyle H_b(q_Y)- (1-q_X)H_b\left(\frac{q_Y- \min\{q_X,q_Y\}}{1-q_X}\right) -
    q_XH_b\left(\frac{\min\{q_X,q_Y\}}{q_X}\right), \hspace{0.2cm}  H_b(q_S) < C < H_b(m) \text{ and } R > H_b(q_X).
\end{cases}
\end{align}    
\end{figure*}

\subsection{Under Rate and Classification Constraints}
\label{subsec:bernoulli_case_classification}

We extend the formulation in \eqref{prob:mecb_randomness} by incorporating constraints on task-relevant information. In addition to matching the target marginal $p_Y$, the reconstruction is required to preserve information about a target variable $S$ as shown in Figure \ref{fig:System_Model_Cross_Domain_MEC_Classifiation}.

\noindent\textbf{Classification constraint.}
In addition to reconstruction quality, we impose an explicit constraint on task-relevant information. Specifically, the reconstruction must satisfy $H(S|Y) \leq C$ for some $C > 0$. This constraint bounds the residual uncertainty of $S$ given $Y$, thereby ensuring a prescribed level of classification performance \cite{Wang2024,nguyen2026crossdomain,nguyen2026rdc,NamNguyen2025}. Equivalently, the condition $H(S|Y)\le C$ implies $I(S;Y) \ge H(S) - C$, ensuring that the reconstruction retains sufficient information for downstream inference tasks.

\begin{figure}[ ]
\centering
\includegraphics[width=0.32\textwidth]{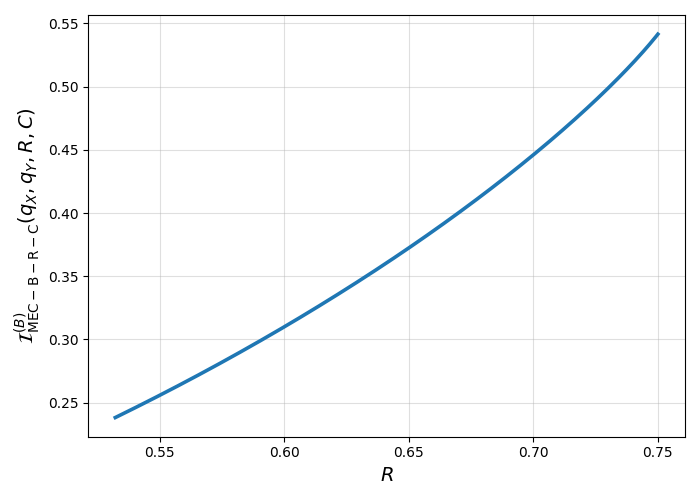}
\caption{$\mathcal{I}_\mathrm{MEC-B-R-C}^{(B)}(q_X,q_Y,R,C)$ versus $R$ for $q_X = 0.3$, $q_Y = 0.4$, $q_{S_1} = 0.01$, and $C=0.4$.}
\label{fig:I_R_Bernoulli_C}
\end{figure}

\begin{definition}
\label{def:DRC_random_oneshot_def_classification}
Define $M(p_X, p_Y)$ as the set of joint distributions $p_{U,X,Z,Y}$ with marginals $p_X, p_Y$ that factorize as $p_{U,X,Z,Y} = p_U \, p_X \, p_{Z|X,U} \, p_{Y|Z,U}$, where $U$ is the shared common randomness. The constrained MEC problem with rate constraint $R$ and classification constraint $H(S|Y)\le C$ under shared randomness is given by
\begin{align}
\label{prob:oneshot_random_def_classification}
     \mathcal{I}_\text{MEC-B-R}(p_X, p_Y, R, C)  
    &= \max_{p_{U,X,Z,Y} \in M(p_X, p_Y)} \, I(X;Y)  \\
    \text{s.t.} \quad & H(Z|U) \leq R, \nonumber\\
    & H(S|Y) \leq C. \nonumber 
\end{align}
\end{definition}

\begin{figure*}[ ]
\centering 
\includegraphics[width=0.58\textwidth]{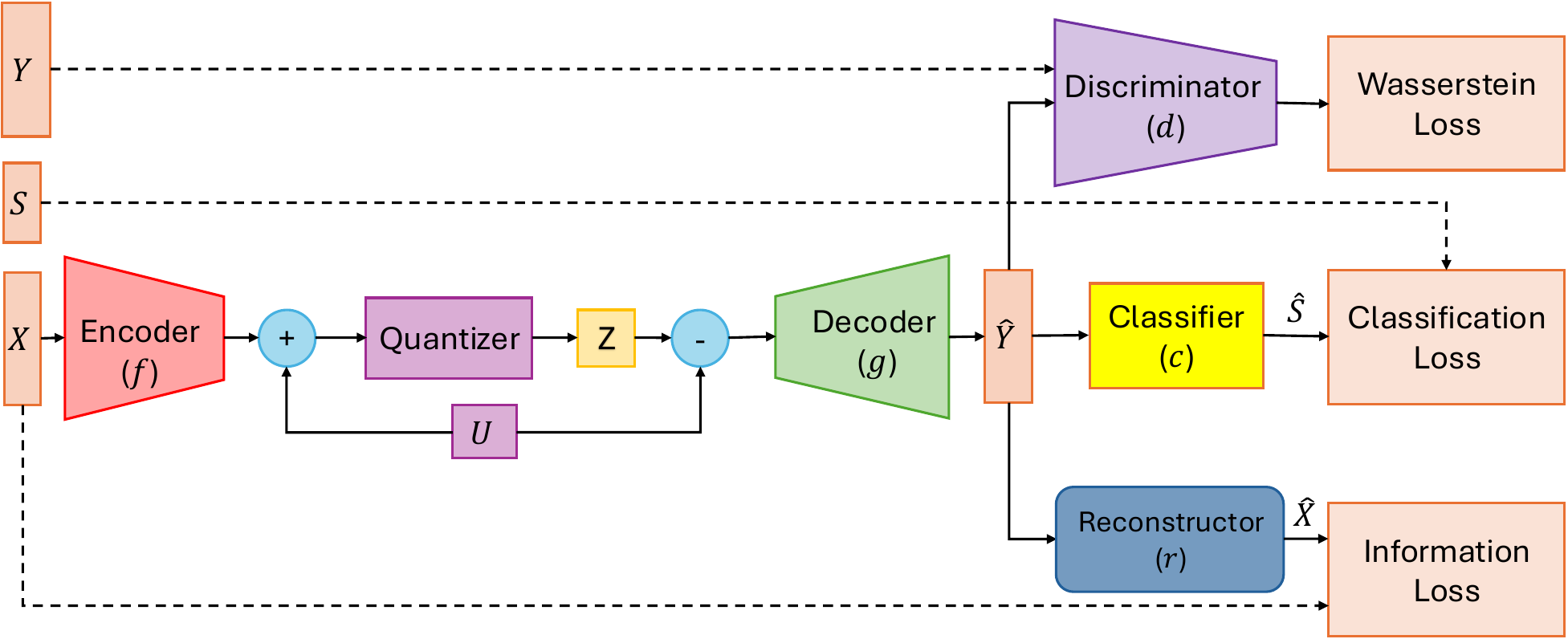}
\caption{Experimental architecture: a stochastic autoencoder with classifier, reconstructor, and WGAN discriminator, conditioned on shared randomness $U$.}
\label{fig:Scheme_Cross_Domain_MEC}
\end{figure*} 

\subsubsection{Bernoulli Case Expression}
We now specialize the formulation in  \eqref{prob:oneshot_random_def_classification} to the Bernoulli setting.  
Let $X \sim \mathrm{Bern}(q_X)$ and $Y \sim \mathrm{Bern}(q_Y)$ with $0 < q_X, q_Y \leq \tfrac{1}{2}$. Using $\oplus$ to denote modulo-2 addition, the classification variable is modeled as $S = X \oplus S_1$, where $S_1 \sim \mathrm{Bern}(q_{S_1})$ with $0 < q_{S_1} \leq \tfrac{1}{2}$. This induces the marginal distribution $q_S = P(S=1) = q_X + q_{S_1} - 2q_X q_{S_1}$.
\begin{theorem}
\label{Oneshot_Bernoulli_radom_IRC_classifiction}
The problem~(\ref{prob:oneshot_random_def_classification}) is feasible only if
$C\geq H_b(q_{S_1})$. Under common randomness, the
rate- and classification-constrained MEC-B-R-C problem for Bernoulli source admits the
closed-form solution of $\mathcal{I}_\text{MEC-B-R-C}^{(B)}(q_X, q_Y, R, C)$
as given in Equation~(\ref{eq:MEC_B_R_C}), where $m = (1 - q_X)(1 - q_{S_1}) +  q_X q_{S_1}$.
\end{theorem}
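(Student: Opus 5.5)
First, extend Theorem~\ref{thm:oneshot_random} to the classification-constrained problem~(\ref{prob:oneshot_random_def_classification}). The classification constraint $H(S|Y)\le C$ depends only on the joint law of $(X,Y)$, because $S=X\oplus S_1$ with $S_1$ independent of $(X,Y,U)$. Therefore the construction that removes $Z$ carries over unchanged: set $Y=f(X,U)$ deterministically, keep $U$ independent of $X$, and replace the rate constraint by $H(Y|U)\le R$. This reduces the problem to optimizing over $p_U$ and deterministic maps $f_u:\{0,1\}\to\{0,1\}$.

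Next, parametrize the Bernoulli problem. There are only four binary maps: the two constants, the identity, and the flip. Let $U$ select the identity with weight $\alpha$, the flip with weight $\beta$, and the constants with the remaining weight. The rate term becomes $H(Y|U)=(\alpha+\beta)H_b(q_X)$, since the constant maps cost no rate. The induced channel $p_{Y|X}$ is a convex combination of these four maps.

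The two coordinates of the problem then behave as follows.
\begin{enumerate}
\item[(i)] For a fixed channel, $I(X;Y)$ is convex in $p_{Y|X}$.
\item[(ii)] With the marginal fixed at $q_Y$, the channel is a one-parameter family. In the ``positive'' direction it has $P(Y=1|X=1)=q_Y+(1-q_X)\alpha$ and $P(Y=1|X=0)=q_Y-q_X\alpha$; the ``negative'' direction has the signs swapped.
\end{enumerate}
This reproduces the expression of Theorem~\ref{Bernoulli_radom_IRC}. As there, $I(X;Y)$ is increasing in the effective correlation parameter $\alpha$. The parameter is limited by the rate constraint, $\alpha\le R/H_b(q_X)$, and by the requirement that the conditionals are valid probabilities. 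The validity bound is $\min\{q_Y/q_X,(1-q_Y)/(1-q_X)\}$ in the positive direction and $\min\{q_Y/(1-q_X),(1-q_Y)/q_X\}$ in the negative direction.

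The new ingredient is computing $H(S|Y)$ along this family. Given $Y=y$, the label $S$ is Bernoulli with parameter $P(X=1|y)*q_{S_1}$, where $*$ is binary convolution. At $\alpha=0$ we have $Y$ independent of $X$, so $H(S|Y)=H_b(q_S)$. The paper's constant $m$ equals $P(S=0\mid X=0)$ weighted as stated; it plays the role of the endpoint value $H_b(m)$. At full correlation, $H(S|Y)$ approaches $H_b(q_{S_1})$, which gives the necessity claim: $H(S|Y)\ge H(S|X)=H_b(q_{S_1})$ by the data-processing inequality along $Y\to X\to S$.

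I would then argue, or bound and adopt as the paper's working relation, that $H(S|Y)$ interpolates linearly in $\alpha$ between $H_b(m)$ and $H_b(q_{S_1})$:
\[
H(S|Y)=H_b(m)+\alpha\bigl(H_b(q_{S_1})-H_b(m)\bigr).
\]
This linear form is exactly what the thresholds in~(\ref{eq:MEC_B_R_C}) encode. Under it, the classification constraint becomes $\alpha\ge (H_b(m)-C)/(H_b(m)-H_b(q_{S_1}))$, a lower bound on $\alpha$. The rate and validity constraints are upper bounds on $\alpha$.

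The case analysis then follows by comparing which constraint binds:
\begin{enumerate}
\item[(a)] When the rate budget $R/H_b(q_X)$ is the active upper bound and still exceeds the classification requirement, the optimum sits at the rate bound. This gives the first two cases, one for each sign direction.
\item[(b)] When $C$ lies between $H_b(q_{S_1})$ and the value attained at the rate bound, the classification constraint pins $\alpha$ at $(C-H_b(m))/(H_b(q_{S_1})-H_b(m))$, provided this value is within the validity bound. This gives cases three and four.
\item[(c)] When $R>H_b(q_X)$, the rate is slack. The optimum is then the unconstrained minimum entropy coupling of two Bernoulli variables, the greedy coupling, which gives the last line.
\end{enumerate}

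The main obstacle is the exact behaviour of $H(S|Y)$ as a function of $\alpha$. It is a mixture of binary entropies of posteriors, which is concave rather than linear in general. The feasibility regions in~(\ref{eq:MEC_B_R_C}) are therefore delicate. I would first verify the endpoint values at $\alpha=0$ and at the extreme admissible $\alpha$. I would then justify the linear interpolation via time-sharing through $U$: mixing the ``copy $X$'' map and the ``independent'' map makes $H(S|Y,U)$ exactly linear. I would also need to check that the bound remains valid for $H(S|Y)$ without $U$ under the paper's formulation. Failing that, I would state the result with the linear surrogate as the effective constraint.
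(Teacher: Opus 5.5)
Your reduction steps are sound and match the paper. These are removing $Z$ (your observation that $H(S|Y)$ depends only on $p_{XY}$ is what makes this work), the four-map parametrization, and the necessity of $C\ge H_b(q_{S_1})$ from $H(S|Y)\ge H(S|X)$. The proposal breaks at the two steps that produce the classification-dependent lines of \eqref{eq:MEC_B_R_C}.

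First, as you suspect, the linear law $H(S|Y)=H_b(m)+\alpha\bigl(H_b(q_{S_1})-H_b(m)\bigr)$ is false. It is the time-sharing identity for $H(S|Y,U)$, and $H(S|Y)\ge H(S|Y,U)$, typically strictly. For $q_X=0.3$, $q_Y=0.4$, $q_{S_1}=0.01$, identity weight $0.5$ and no flip, $H(S|Y)\approx 0.73$ bits, versus $\approx 0.48$ from the linear formula. Using the surrogate is therefore a relaxation. It only upper-bounds the value of \eqref{prob:oneshot_random_def_classification} and enlarges the feasible regions, so ``stating the result with the linear surrogate'' proves a statement about a different problem. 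The paper's proof offers no fix: it makes exactly this substitution, silently, when it writes $H(S|Y)=\sum_u p_U(u)H(S|f_u(X))$. Moreover, the surrogate depends on the total weight $\alpha+\beta$ of the non-constant maps, not on the net correlation. Discarding the flip map is harmless for the true $H(S|Y)$, but it is no longer without loss of generality once you switch to the surrogate.

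Second, step (b) fails even if you grant the surrogate. You correctly note that the classification constraint becomes a \emph{lower} bound $\alpha\ge L:=\frac{H_b(m)-C}{H_b(m)-H_b(q_{S_1})}$, while rate and validity are upper bounds and $I(X;Y)$ increases in $\alpha$. So the optimum sits at the smallest upper bound whenever that bound is at least $L$; the classification constraint only decides feasibility and never pins the optimizer at $L$. In the regime you assign to (b), $C<H_b(m)+\frac{R}{H_b(q_X)}\bigl(H_b(q_{S_1})-H_b(m)\bigr)$. Dividing by the negative number $H_b(q_{S_1})-H_b(m)$ gives exactly $L>R/H_b(q_X)$, so $\alpha=L$ violates the rate constraint. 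Indeed the problem is infeasible there. Any feasible point has $p_U(1)+p_U(2)\le R/H_b(q_X)$, while $C\ge H(S|Y)\ge H(S|Y,U)=H_b(m)-\bigl(p_U(1)+p_U(2)\bigr)\bigl(H_b(m)-H_b(q_{S_1})\bigr)$ forces $p_U(1)+p_U(2)\ge L$.

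The paper arrives at its third and fourth lines through the same slip. Its Case~2 claims the rate constraint is inactive when $C<\frac{R(H_b(q_{S_1})-H_b(m))}{H_b(q_X)}+H_b(m)$. But with $p_U(1)=L$, the condition $H_b(q_X)p_U(1)<R$ is equivalent to the reverse inequality. So in the interior of their stated region those two lines assign a value to an infeasible problem, and no argument can establish them. Step (c) similarly needs the classification constraint to hold at the greedy coupling, which $R>H_b(q_X)$ does not ensure. The paper's condition $H_b(q_S)<C<H_b(m)$ for that line is empty, since $m=P(S=0)=1-q_S$. What your argument does establish is the necessary condition $C\ge H_b(q_{S_1})$ and, for the relaxed problem, the rate-limited value in (a).
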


\begin{proof}
The proof is provided in Appendix~\ref{app:Oneshot_Bernoulli_radom_IRC_classifiction_proof}.
\end{proof}

Figure~\ref{fig:I_R_Bernoulli_C} shows
\(\mathcal{I}_\mathrm{MEC-B-R-C}^{(B)}(q_X,q_Y,R,C)\) versus \(R\) for
\(q_X = 0.3\), \(q_Y = 0.4\), \(q_{S_1}=0.01\), and \(C=0.4\). The curve
increases with \(R\), indicating that a larger bottleneck rate allows \(Y\) to
preserve more information about \(X\) while satisfying both the marginal and
classification constraints.

\section{Experimental Results}
\label{sec:Rate_Upper_Bound}

\subsection{Training Setup}

\begin{figure*}[ ]
\centering
\subfloat[Accuracy vs.~rate (MNIST).]{
    \includegraphics[width=0.28\linewidth]{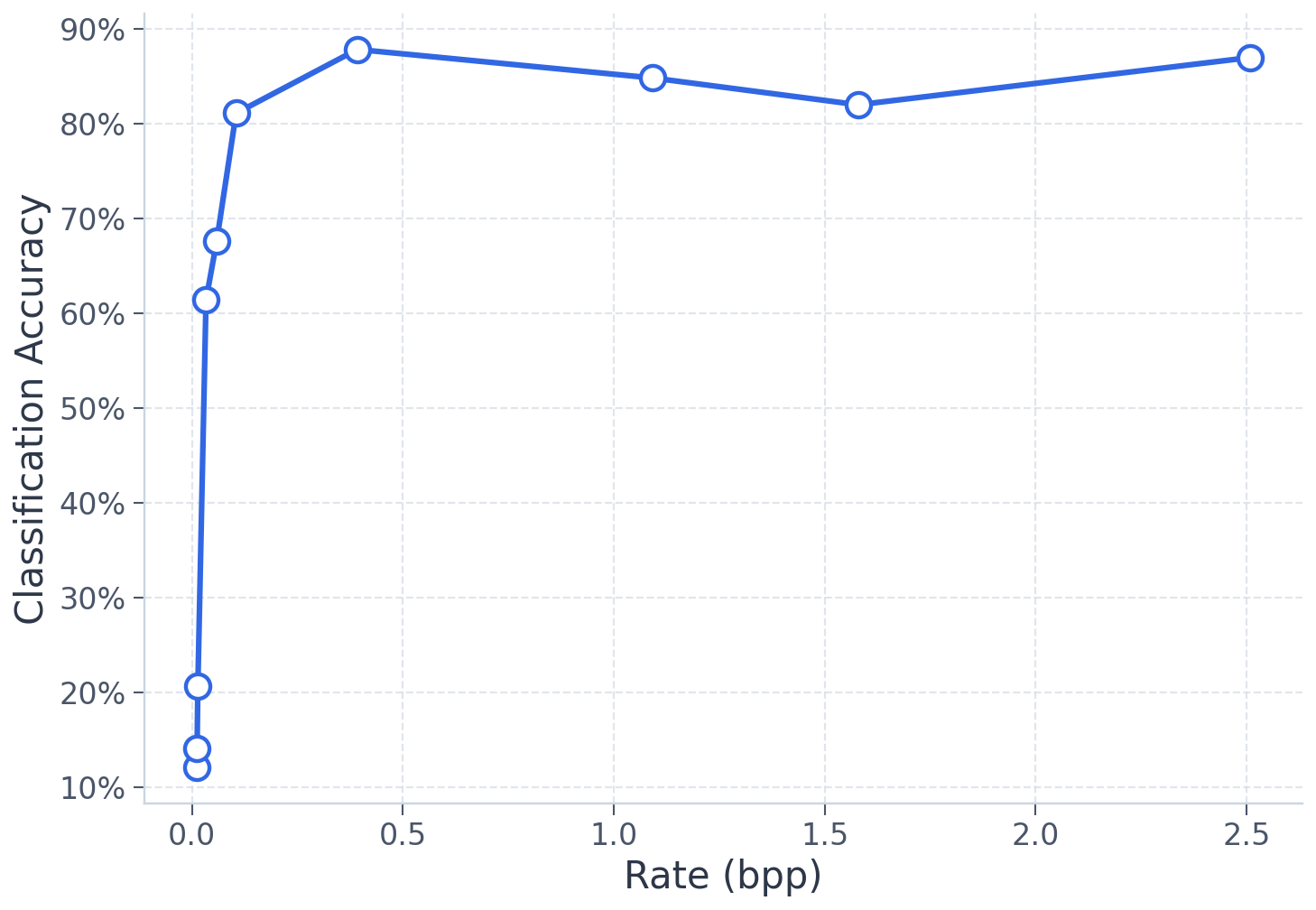}
    \label{fig:Accuracy_Rate_Resolution_MEC}
}
\hfill
\subfloat[Cross-entropy vs.~rate (MNIST).]{
    \includegraphics[width=0.28\linewidth]{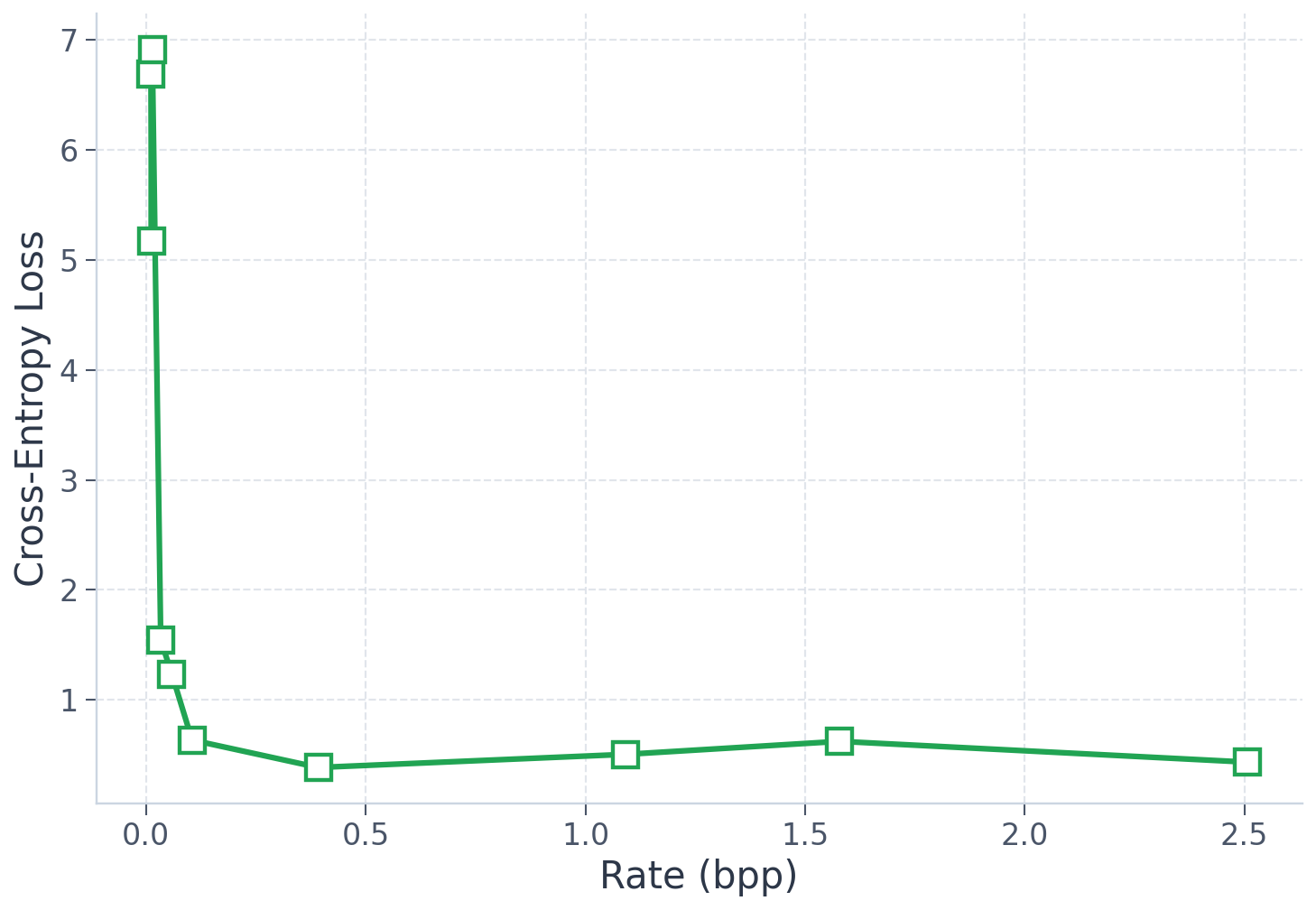}
    \label{fig:Cross_Entropy_Loss_Rate_Resolution_MEC}
}
\hfill
\subfloat[Reconstruction images (MNIST).]{
    \includegraphics[width=0.39\linewidth]{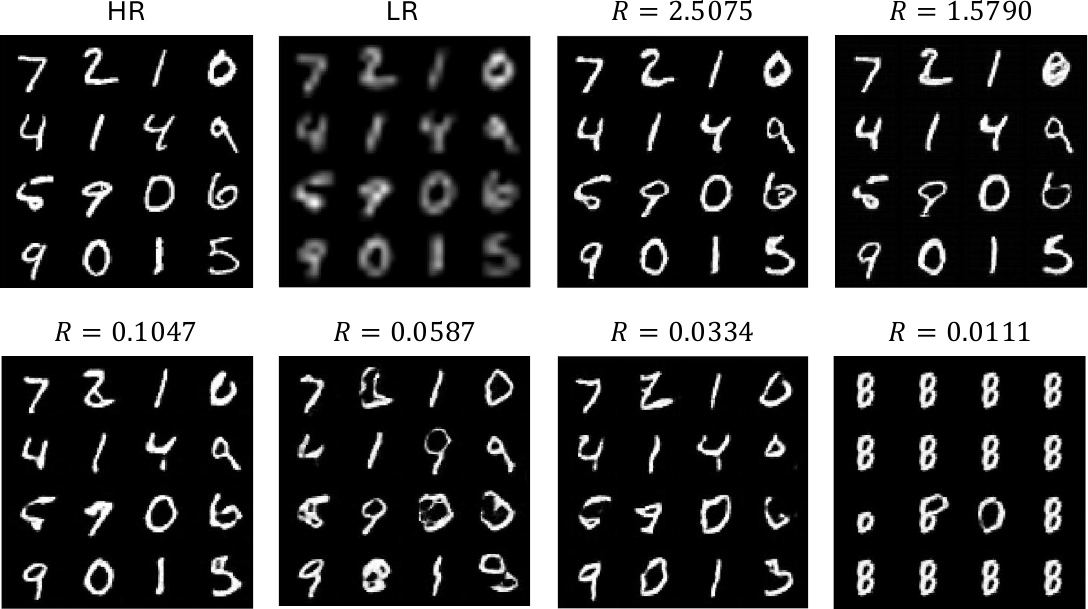}
    \label{fig:Reconstruction_Resolution_MEC}
}
\hfill
\subfloat[Accuracy vs.~rate (SVHN).]{
    \includegraphics[width=0.28\linewidth]{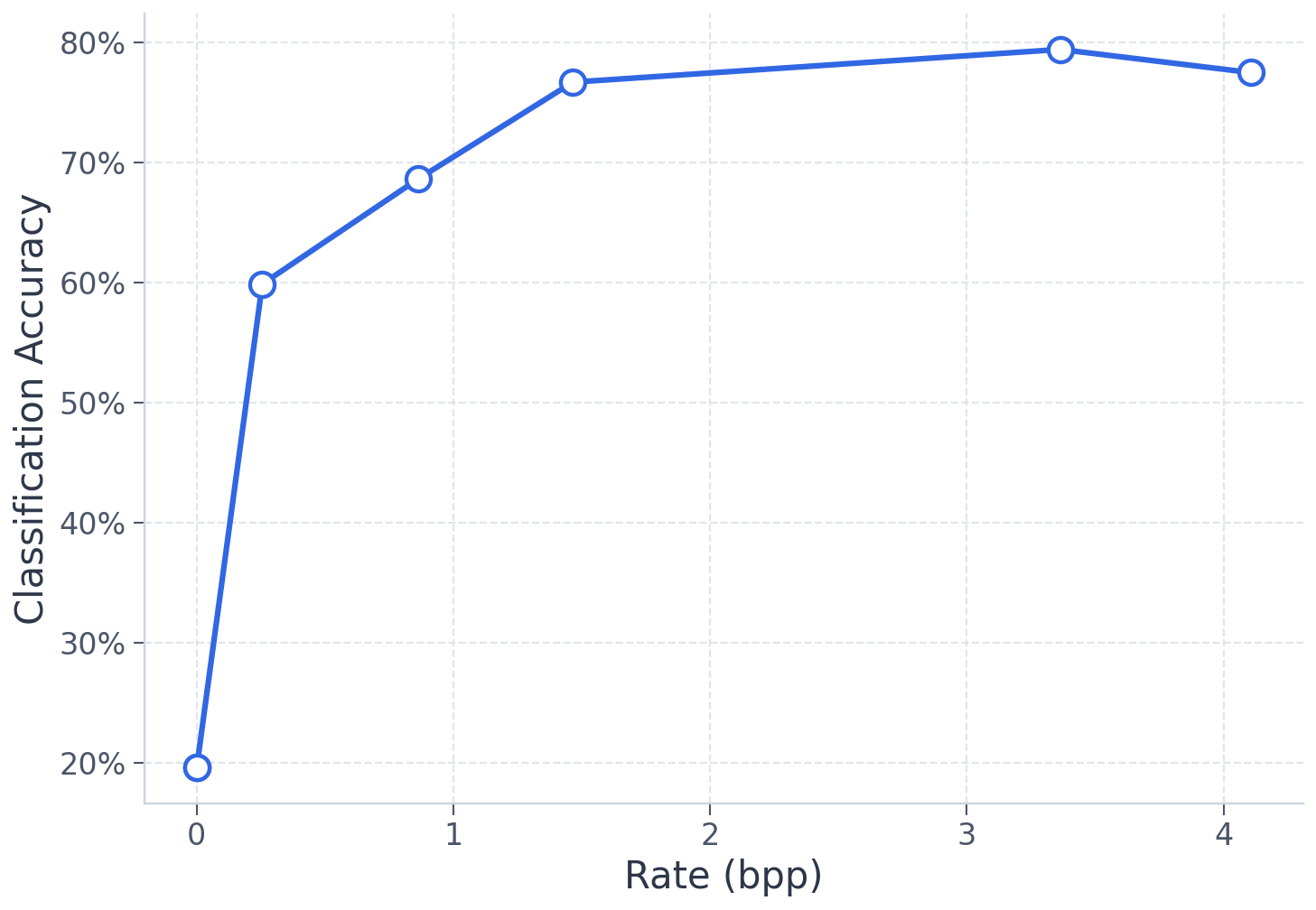}
    \label{fig:Accuracy_Rate_Denoising_MEC}
}
\hfill
\subfloat[Cross-entropy vs.~rate (SVHN).]{
    \includegraphics[width=0.28\linewidth]{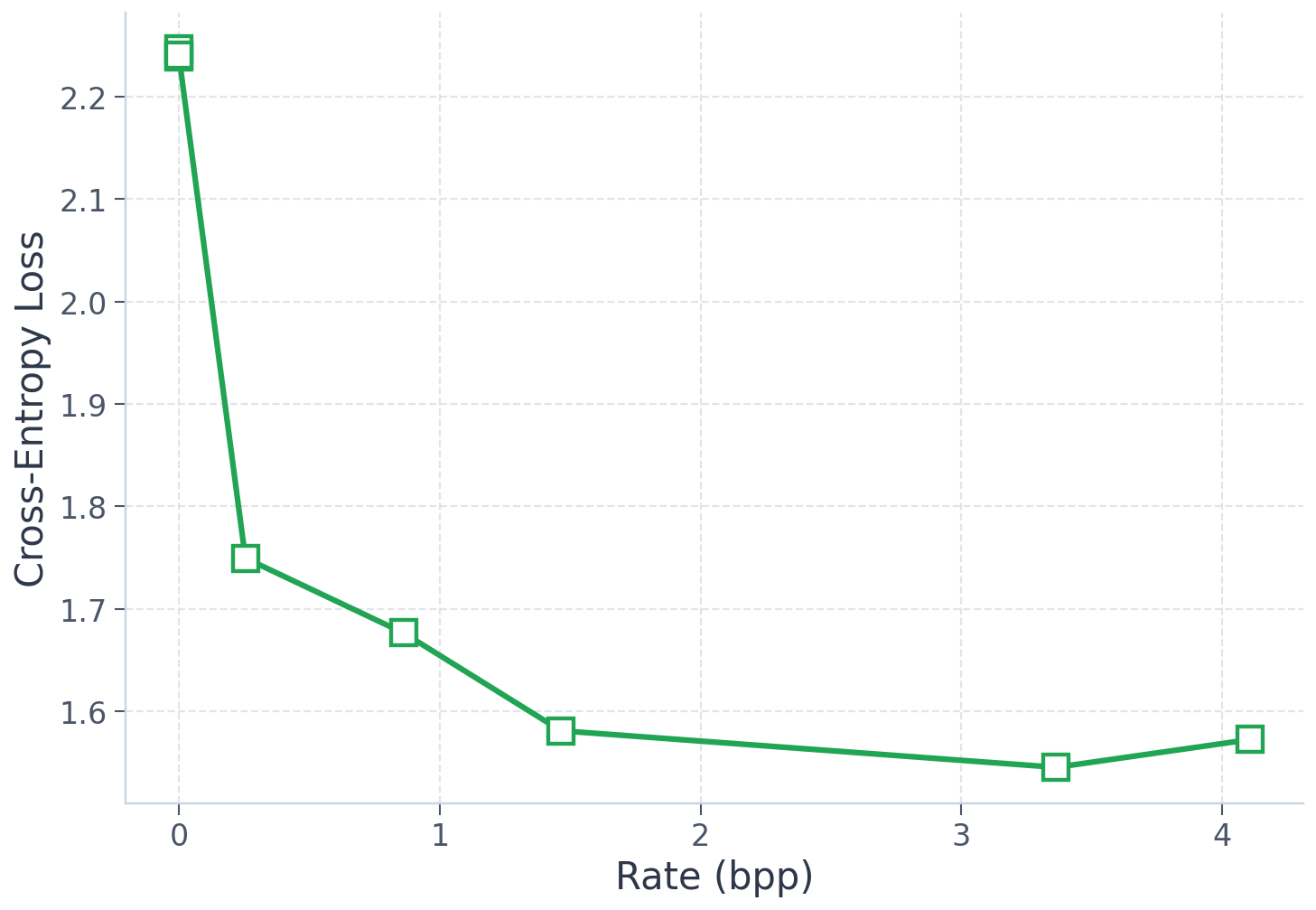}
    \label{fig:Cross_Entropy_Loss_Rate_Denoising_MEC}
}
\hfill
\subfloat[Reconstruction images (SVHN).]{
    \includegraphics[width=0.39\linewidth]{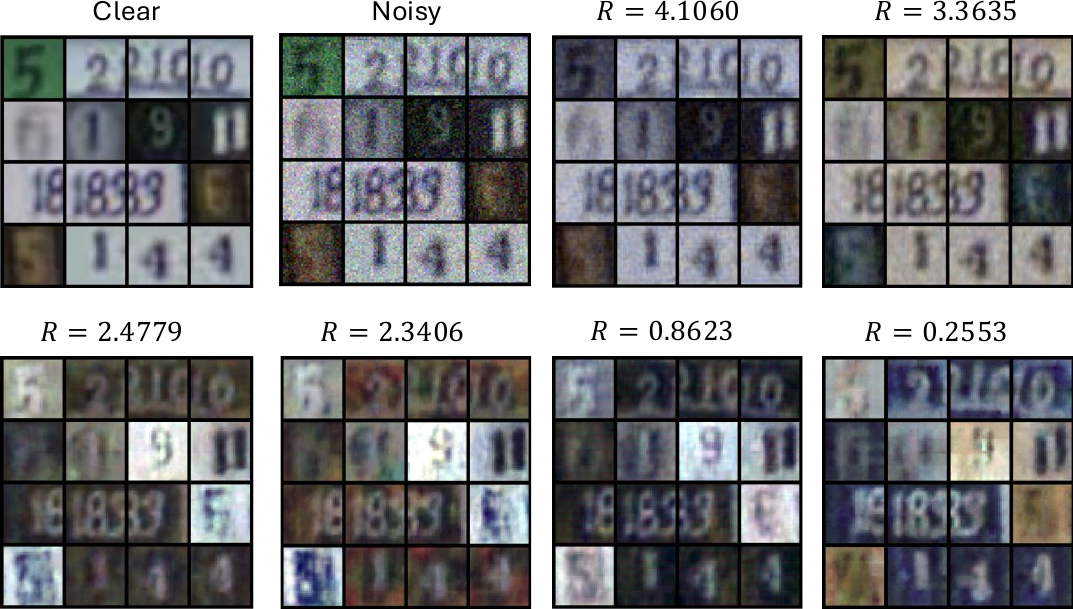}
    \label{fig:Reconstruction_Denosing_MEC}
}
\caption{Experimental results of $4\times$ super-resolution on MNIST dataset and image denoising on the SVHN dataset corrupted by Gaussian noise, $\mathcal{N}(0,\sigma)$ with $\sigma = 25$. Increasing the rate improves classification performance.}
\label{fig:super_resolution_MEC}
\end{figure*}

We consider a cross-domain restoration setting in which the encoder observes
degraded samples $X \sim p_X$ (e.g., noisy or low-resolution) and generates
reconstructions following a target distribution $Y \sim p_Y$ (e.g., clean or
high-resolution). The objective is to compress $X$ while preserving
task-relevant information for a downstream label $S$. The experimental scheme
is shown in Figure~\ref{fig:Scheme_Cross_Domain_MEC}.

To implement Problem~\eqref{prob:oneshot_random_def_classification}, we adopt a
stochastic autoencoder consisting of an encoder $f$, quantizer $Q$, decoder
$g$, classifier $c$, and a WGAN discriminator $d$. The classification
constraint is enforced through the cross-entropy loss
$\text{CE}(S,\hat{S})$, which upper bounds the conditional entropy
$H(S|Y)$~\cite{boudiaf2021unifying_cross_entropy,Wang2024}. The compression
rate is estimated using a learned entropy model:
\[
R = \mathbb{E}_{X \sim p_X}\!\left[-\log \mathbb{P}(Q(f(X,U)))\right],
\]
where $\mathbb{P}(Q(f(X,U)))$ is parameterized as a factorized non-parametric
distribution~\cite{balle2018variational}.

To maximize the coupling objective $I(X;Y)$, we employ the variational lower
bound~\cite{barber2004algorithm,chen2016infogan}
\begin{align*}
I(X;Y)
&\ge I(X;Y)_{\text{lb}} \\
&= H(X)
+ \mathbb{E}_{x \sim p_X,\;
y \sim p_{Y|X}(\cdot|x)}
\big[ \log q_\gamma(x|y) \big],
\end{align*}
where $q_\gamma(x|y)$ is implemented using a neural reconstructor $r$ that
models the reverse mapping from reconstructed samples to degraded inputs.
For a target rate $R$ and shared randomness $U$, the system solves
\begin{align*}
\min_{f,g,Q} \quad
& -\, I_{\mathrm{lb}}(X;Y) \\
\text{s.t.}\quad
& p_{g(Q(f(X,U)))} = p_Y, \\
& \mathbb{E}\!\left[-\log \mathbb{P}(Q(f(X,U)))\right] \le R, \\
& H\!\big(S \mid g(Q(f(X,U)))\big) \le C.
\end{align*}
Letting $\hat{Y} = g(Q(f(X,U)))$, the distributional constraint is enforced
using a WGAN discriminator through the Wasserstein-1
distance~\cite{arjovsky2017wasserstein}. Shared randomness is implemented via
universal quantization~\cite{ziv1985universal,theis2021advantages}. In
practice, we optimize the relaxed objective
\begin{align*}
\mathcal{L} &= - I(X;Y)_{\text{lb}} 
- \lambda_r \log \mathbb{P}(Q(f(X,U))) \\
&+ \lambda_p W_1(p_Y,p_{\hat{Y}}) 
+ \lambda_c \,\textnormal{CE}(S,\hat{S}),
\end{align*}
which balances information preservation, rate regularization,
distribution alignment, and classification performance.

\subsection{Results}

Fig.~\ref{fig:Accuracy_Rate_Resolution_MEC} and
Fig.~\ref{fig:Cross_Entropy_Loss_Rate_Resolution_MEC} show the results for
\(4\times\) super-resolution on MNIST. Increasing the rate improves
classification accuracy and reduces the cross-entropy loss, indicating that
larger bottleneck rates preserve more task-relevant information about the
source. The qualitative reconstructions in
Fig.~\ref{fig:Reconstruction_Resolution_MEC} further show that higher-rate
representations better preserve digit identity and align more closely with the
target high-resolution domain.

Fig.~\ref{fig:Accuracy_Rate_Denoising_MEC} and
Fig.~\ref{fig:Cross_Entropy_Loss_Rate_Denoising_MEC} report the corresponding
results for Gaussian denoising on SVHN. Similar behavior is observed:
increasing the rate improves classification accuracy and decreases the
cross-entropy loss. The qualitative examples in
Fig.~\ref{fig:Reconstruction_Denosing_MEC} show that higher-rate
representations better preserve semantic and visual structure, while low-rate
reconstructions lose fine details. Some denoised SVHN images exhibit mild color
inconsistencies relative to the clean targets. This behavior is consistent with
the information-theoretic nature of the objective, since mutual information is
invariant under invertible transformations, i.e.,
\(I(X;Y)=I(X;f(Y))\). Consequently, preserving task-relevant information alone
may not uniquely determine low-level visual attributes such as color
statistics. 
Such artifacts could be reduced by incorporating additional fidelity regularization terms and careful architectural design.

Overall, the experimental results demonstrate that the proposed
rate- and classification-constrained MEC framework produces more informative
reconstructions as the available rate increases, while maintaining alignment
with the target reconstruction distribution.

\section{Conclusion}

We studied cross-domain lossy compression via rate- and classification-constrained minimum entropy coupling. Motivated by logarithmic loss, the proposed framework uses an information-based objective to couple a degraded source domain with a prescribed target reconstruction domain while preserving task-relevant information. Under common randomness, we derived an equivalent deterministic formulation and obtained closed-form Bernoulli characterizations with and without classification constraints. Experiments on MNIST super-resolution and SVHN denoising show that increasing the available rate improves classification accuracy and yields more informative reconstructions. These results suggest that rate- and classification-constrained MEC provides a useful framework for distribution-constrained and task-aware compression.

\newpage
\bibliographystyle{IEEEtran}
\bibliography{main}

\clearpage
\setcounter{page}{1}
\appendix

\subsection{Proof of Theorem~\ref{thm:oneshot_random}}\label{app:oneshot_random_proof}

Recall the formulation in Definition \ref{def:mecb_randomness}:
\begin{align}
\mathcal{I}_\text{MEC-B-R}(p_X, p_Y, R) =  & \max_{p_{U,X,Z,Y}\in M(p_X,p_Y)} I(X;Y) \nonumber\\
\textnormal{s.t.}\quad & H(Z|U)\le R. \nonumber
\end{align}
where $M(p_X,p_Y)=\{p_U p_X p_{Z|X,U} p_{Y|Z,U}\}$ and
$Q(p_X,p_Y)=\{p_U p_X p_{Y|U,X}: H(Y|U,X)=0\}$.

\noindent\textbf{Converse.}
Consider any \(p_{U,X,Y}\in Q(p_X,p_Y)\) satisfying \(H(Y|U)\le R\), and define \(Z\triangleq Y\).
Then \(p_{U,X,Z,Y}=p_U\,p_X\,\delta_{Z=Y(X,U)}\,\delta_{Y=Z}\in M(p_X,p_Y)\), with
\(H(Z|U)=H(Y|U)\le R\). The joint distribution of \((X,Y)\) is preserved, and therefore \(I(X;Y)\) remains unchanged. Hence
\begin{align*}
\mathcal{I}_\text{MEC-B-R}(p_X, p_Y, R)  \;\ge\; & \max_{p_{U,X,Y}\in Q(p_X,p_Y)} I(X;Y) \nonumber\\
\textnormal{s.t.}\quad 
& H(Y|U,X) = 0, \\
& I(X;U) = 0, \\
& H(Y|U)\le R.
\end{align*}

\noindent\textbf{Upper bound.}
Let \(p_{U,X,Z,Y}\in M(p_X,p_Y)\) be any feasible distribution with \(H(Z|U)\le R\).
By the functional representation lemma \cite{elgamal2011network, li2018strong}, there exist:
(i) a random seed $V_1$ independent of $(U,X)$ and a measurable mapping \(\phi_1\) such that \(Z=\phi_1(U,X,V_1)\) in distribution;
(ii) a random seed $V_2$ independent of $(U,X,V_1)$ and a measurable mapping \(\phi_2\) such that \(Y=\phi_2(U,Z,V_2)\) in distribution.

Let \(U' \triangleq (U,V_1,V_2)\). Then, $Y \;=\; \phi_2\big(U,\,\phi_1(U,X,V_1),\,V_2\big)$ is deterministic given \((U',X)\), so \(H(Y|U',X)=0\) and \((U',X,Y)\in Q(p_X,p_Y)\).
The marginal \((X,Y)\) is preserved, hence \(I(X;Y)\) is unchanged. For the rate term, conditioning reduces entropy, and determinism gives
\[
H(Z|U)\ \ge\ H(Z|U,V_1,V_2)=H(Z|U')\ \ge\ H(Y|U'),
\]
and \(H(Y|U')\le R\).  Therefore,
\begin{align}
\mathcal{I}_\text{MEC-B-R}(p_X, p_Y, R) \;\le\; & \max_{p_{U',X,Y}\in Q(p_X,p_Y)} I(X;Y) \nonumber\\
\textnormal{s.t.}\quad & 
H(Y|U',X) = 0, \nonumber\\
& I(X;U') = 0, \nonumber \\
& H(Y|U')\le R. \nonumber
\end{align}
Since the auxiliary alphabet is unrestricted, we can relabel \(U'\) as \(U\). Combining the two bounds completes the proof.

\subsection{Proof of Theorem~\ref{Bernoulli_radom_IRC}}\label{app:Bernoulli_radom_IRC_proof}

Since $ H(Y|U) = I(X;Y|U) + H(Y|U,X) = I(X;Y|U)$ and $H(Y|U,X)=0$, the reconstruction $Y$ is a deterministic function of $(X,U)$, i.e., $Y = f(X,U)$. Consequently, Problem~\eqref{prob:mecb_randomness_deterministic_convert} reduces to optimizing over the distribution $p_U$:
\begin{align*}
    \mathcal{I}_\text{MEC-B-R}^{(B)}(q_X, q_Y, R) 
    &= \max_{p_U} \quad I(X;Y) \\
    \textnormal{s.t.} \quad 
    & H(Y|U,X) = 0, \\
    & I(X;U) = 0, \\
    & I(X;Y|U) \leq R.
\end{align*}

Since Shannon entropy is defined for discrete random variables, the auxiliary variable $U$ must be selected such that $Y|U=u$ is discrete for each $u$ \cite{liu2022lossy}.  
Let $\mathcal{U} \triangleq \{1,2,\ldots,|\mathcal{Y}|^{|\mathcal{X}|}\}$ denote the index set of all mappings $f_u:\mathcal{X}\to\mathcal{Y}$.  
By the support lemma (Appendix~C, p.~631 of \cite{elgamal2011network}), it suffices to assign positive probability to at most $|\mathcal{Y}|+1$ such mappings.

The optimization can therefore be written as
\begin{align}
   \mathcal{I}_\text{MEC-B-R}^{(B)}(q_X, q_Y, R) 
   &= \max_{p_U} \quad I(X;Y) \nonumber \\
   \textnormal{s.t.} \quad 
   & \sum_{u \in \mathcal{U}} p_U(u) \, I(X;Y | U=u) \leq R, \nonumber \\
   & \sum_{u \in \mathcal{U}} p_U(u) \, P(f_u(X)=y) = q_Y, \forall y \in \mathcal{Y}. \nonumber 
\end{align}

For binary alphabets, the cardinality of $U$ can be restricted to four without loss of optimality. The four distinct mappings from $\{0,1\}$ to $\{0,1\}$ are
$f_1(x) = x$, $f_2(x) = 1 - x$, $f_3(x) = 0$, and $f_4(x) = 1$, where $x \in \{0,1\}$. Accordingly, $Y = X$ if $U=1$, $Y = 1-X$ if $U=2$, $Y = 0$ if $U=3$, and $Y = 1$ if $U=4$. This yields
\begin{align*}
I(X;Y|U) &= \sum_{u \in \mathcal{U}} p_U(u) I(X;Y|U = u) \\
&= \sum_{u \in \mathcal{U}} p_U(u) H(f_u(X)) \!=\! H_b(q_X)(p_U(1) \!+\! p_U(2)), \\
P(Y = 1) &= q_Y = \sum_{u \in \mathcal{U}} p_U(u) P(f_u(X) = y) \\
&= p_U(1)q_X + (1 - q_X)p_U(2) + p_U(4).
\end{align*}
Moreover, for each $x \in \{0,1\}$, we have
\begin{align*}
P(Y=1|X=0) &= p_U(2) + p_U(4), \\
P(Y=1|X=1) &= p_U(1) + p_U(4),
\end{align*}
which implies
\begin{align*}
H(Y|X)
&= \sum_{x \in \mathcal{X}} p_X(x) H(Y|X=x) \\
&= (1-q_X)H_b\bigl(p_U(2)+p_U(4)\bigr) \\
&+ q_X H_b\bigl(p_U(1)+p_U(4)\bigr).
\end{align*}
Since $Y \sim \mathrm{Bern}(q_Y)$, we also have $H(Y)=H_b(q_Y).$
Therefore,
\begin{align*}
I(X;Y)
&= H(Y)-H(Y|X) \\
&= H_b(q_Y) - (1-q_X)H_b\bigl(p_U(2)+p_U(4)\bigr) \\
&- q_X H_b\bigl(p_U(1)+p_U(4)\bigr).
\end{align*}

The resulting optimization is given by (\ref{MEC_B_R}).
\begin{figure*}
\begin{align}
\label{MEC_B_R}
\mathcal{I}_\text{MEC-B-R}(q_X, q_Y, R) \nonumber \\
&= \max_{p_U(1),\, p_U(2),\, p_U(3),\, p_U(4)} \;
H_b(q_Y) - (1-q_X)H_b\bigl(p_U(2)+p_U(4)\bigr)
- q_X H_b\bigl(p_U(1)+p_U(4)\bigr) \\
\textnormal{s.t.} \quad 
& H_b(q_X)\,(p_U(1)+p_U(2)) \leq R, \nonumber\\
& q_X p_U(1) + (1-q_X)p_U(2) + p_U(4) = q_Y, \nonumber\\
& p_U(1)+p_U(2)+p_U(3)+p_U(4) = 1, \nonumber\\
& p_U(1),\, p_U(2),\, p_U(3),\, p_U(4) \geq 0. \nonumber
\end{align} 
\noindent\rule{\textwidth}{0.6pt}
\end{figure*}

From the constraints, we have
\begin{align*}
q_X p_U(1) + (1-q_X)p_U(2) + p_U(4) &= q_Y, \\
p_U(1)+p_U(2)+p_U(3)+p_U(4) &= 1, 
\end{align*}
then, 
\begin{align}
p_U(3) &= 1-q_Y-(1-q_X)p_U(1)-q_Xp_U(2), \nonumber\\
p_U(4) &= q_Y-q_Xp_U(1)-(1-q_X)p_U(2). \label{eq:mecbr_p4_expr}
\end{align}
Substituting \eqref{eq:mecbr_p4_expr} into the objective yields
\begin{align*}
p_U(2)+p_U(4) &= q_Y-q_X\bigl(p_U(1)-p_U(2)\bigr),\\
p_U(1)+p_U(4) &= q_Y+(1-q_X)\bigl(p_U(1)-p_U(2)\bigr).
\end{align*}
Hence, the objective depends on \(\bigl(p_U(1),p_U(2)\bigr)\) only through the difference $d = p_U(1)-p_U(2).$
Define
\begin{align}
F(d)
=
(1-q_X)H_b\bigl(q_Y-q_Xd\bigr)
+q_XH_b\bigl(q_Y+(1-q_X)d\bigr).
\label{eq:mecbr_F_def}
\end{align}
The problem then reduces to minimizing \(F(d)\) over feasible values of \(d\). We shall show that the objective is nonincreasing in \(d\) for \(d\ge 0\). Differentiating \eqref{eq:mecbr_F_def}, we obtain
\begin{align*}
F'(d)
&= -q_X(1-q_X)H_b'\bigl(q_Y-q_Xd\bigr) \\
&+ q_X(1-q_X)H_b'\bigl(q_Y+(1-q_X)d\bigr) \\
&= q_X(1-q_X)
\Bigl[
H_b'\bigl(q_Y \!+\! (1-q_X)d\bigr)
\!-\!
H_b'\bigl(q_Y-q_Xd\bigr)
\Bigr].
\end{align*}
It is noted that \(H_b'(t)=\log\frac{1-t}{t}\) is strictly decreasing on \(t \in (0,1)\). Moreover, $q_Y+(1-q_X)d \ge q_Y-q_Xd$ for all $d\ge 0$. Hence, it follows that $F'(d)\le 0$ and for all $d\ge 0.$ Thus \(F(d)\) is nonincreasing in \(d\). Consequently, minimizing \(F(d)\) is equivalent to maximizing \(d\).

We next show this lemma as follows
\begin{lemma}
Without loss of optimality, we may restrict attention to feasible points with \(p_U(2)=0\).    
\end{lemma}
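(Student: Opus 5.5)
The plan has two steps. First, reduce to $\min\{p_U(1),p_U(2)\}=0$ by a shift that leaves $d=p_U(1)-p_U(2)$ unchanged. Second, handle a remaining point with $p_U(1)=0$ by a mirror argument that uses $q_X,q_Y\le\tfrac12$.

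\textbf{Step 1 (common shift).} Take any feasible $(p_U(1),\dots,p_U(4))$ for (\ref{MEC_B_R}) and let $t=\min\{p_U(1),p_U(2)\}$. Replace $p_U(1),p_U(2)$ by $p_U(1)-t,\;p_U(2)-t$. By \eqref{eq:mecbr_p4_expr}, $p_U(3)$ and $p_U(4)$ each increase by exactly $t$. Hence:
\begin{itemize}
\item all four masses stay nonnegative and still sum to one;
\item the marginal constraint $P(Y=1)=q_Y$ is preserved, since \eqref{eq:mecbr_p4_expr} was derived from it;
\item the rate term $H_b(q_X)(p_U(1)+p_U(2))$ decreases by $2tH_b(q_X)$, so the rate constraint still holds;
\item the objective is unchanged, because it depends only on $d$ through $F(d)$ in \eqref{eq:mecbr_F_def}.
\end{itemize}
If $d\ge 0$, this already gives a point with $p_U(2)=0$ and the same value.

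\textbf{Step 2 (mirror when $d<0$).} After Step 1 we are at $(0,p,p_U(3),p_U(4))$ with $p>0$ and $d=-p$. Define the mirrored point $(p,0,p_U(3)',p_U(4)')$, with $p_U(3)',p_U(4)'$ given by \eqref{eq:mecbr_p4_expr}. It has the same rate $pH_b(q_X)$ and satisfies the marginal and sum constraints by construction, so only nonnegativity needs checking:
\begin{itemize}
\item $p_U(4)'=q_Y-q_Xp\ge q_Y-(1-q_X)p=p_U(4)\ge 0$, using $q_X\le\tfrac12$.
\item From $p_U(4)\ge0$ we get $(1-q_X)p\le q_Y$. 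Then $p_U(3)'=1-q_Y-(1-q_X)p\ge 1-2q_Y\ge 0$, using $q_Y\le\tfrac12$.
\end{itemize}
So the mirrored point is feasible with $d=p$.

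\textbf{Step 3 (main obstacle: comparing values).} It remains to show $F(p)\le F(-p)$ for every $p\ge 0$ in the feasible range. Set
\[
G(p)=F(p)-F(-p),\qquad G(0)=0.
\]
I would differentiate using $H_b'(t)=\log\frac{1-t}{t}$:
\[
G'(p)=q_X(1-q_X)\Bigl[H_b'(a_+)+H_b'(a_-)-H_b'(b_+)-H_b'(b_-)\Bigr],
\]
where $a_\pm=q_Y\pm(1-q_X)p$ and $b_\pm=q_Y\pm q_Xp$. The pairs $(a_-,a_+)$ and $(b_-,b_+)$ are both symmetric about $q_Y$. Since $q_X\le 1-q_X$, the $a$-pair is more spread out than the $b$-pair.

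The sign of $G'$ is therefore governed by the convexity of $H_b'$ on the relevant interval. $H_b'$ is convex on $(0,\tfrac12]$, and $q_Y\le\tfrac12$ keeps the points mostly there. Convexity would give $H_b'(a_+)+H_b'(a_-)\ge H_b'(b_+)+H_b'(b_-)$, which is the wrong sign for $G'\le 0$. I expect this step to be the genuine difficulty.

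If the derivative inequality fails, I would try the other side. Compare $F(p)$ and $F(-p)$ directly as entropies $H(Y|X)$ of the two channels. The $d=p$ channel mixes the identity map with constant maps; the $d=-p$ channel mixes the flip map with constant maps. I would compare these via a majorization argument on the conditional laws $P(Y=1|X=x)$.

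If that comparison also goes the wrong way, the lemma only holds after accounting for the second branch seen in (\ref{eq:MEC_B_R_C}). In that case I would state Step 1 as the reduction "$p_U(1)p_U(2)=0$ without loss of optimality", and treat the two branches $p_U(2)=0$ and $p_U(1)=0$ symmetrically.
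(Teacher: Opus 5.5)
\textbf{Gap in Step 3, and the lemma is false as stated.} Your Steps 1 and 2 are correct. Step 3 cannot be completed, by majorization or any other argument, because the inequality it needs is false. Your derivative computation already settles this.

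Put $\psi(s)=H_b'(q_Y+s)+H_b'(q_Y-s)$. In nats $H_b''(t)=-1/(t(1-t))$, and $(q_Y+s)(1-q_Y-s)-(q_Y-s)(1-q_Y+s)=2s(1-2q_Y)\ge 0$. Hence $\psi'(s)=H_b''(q_Y+s)-H_b''(q_Y-s)\ge 0$ for every admissible $s\ge 0$, whether or not the points stay in $(0,\tfrac12]$. Therefore $G'(p)=q_X(1-q_X)\bigl[\psi((1-q_X)p)-\psi(q_Xp)\bigr]\ge 0$ on the whole range $0\le p\le q_Y/(1-q_X)$ of the $p_U(1)=0$ branch. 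So $F(-p)\le F(p)$, strictly when $q_X,q_Y<\tfrac12$ and $p>0$. At equal rate, the $p_U(1)=0$ point is at least as good as its mirror, so your mirror step goes the wrong way.

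Concretely, let $q_X=0.2$, $q_Y=0.3$, $R=0.1\,H_b(0.2)$. Any feasible point with $p_U(2)=0$ has $d=p_U(1)\le 0.1$. The best such point is $(p_U(1),\dots,p_U(4))=(0.1,0,0.62,0.28)$, giving $I(X;Y)=H_b(0.3)-0.8H_b(0.28)-0.2H_b(0.38)\approx 0.0053$ bits. The feasible point $(0,0.1,0.68,0.22)$ gives $I(X;Y)=H_b(0.3)-0.8H_b(0.32)-0.2H_b(0.22)\approx 0.0058$ bits.

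More generally, take $q_X,q_Y<\tfrac12$ and $0<R\le H_b(q_X)\,q_Y/(1-q_X)$. Every feasible point has $|d|\le R/H_b(q_X)$, and $F$ is strictly concave. So every optimal point has $d=-R/H_b(q_X)$, and none has $p_U(2)=0$.

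\textbf{The paper's proof has the same hole.} It sets $\tilde p_U(1)=p_U(1)-p_U(2)$ and $\tilde p_U(2)=0$, then verifies $\tilde p_U(3),\tilde p_U(4)\ge 0$ and the rate bound. It never checks $\tilde p_U(1)\ge 0$, which fails precisely when $d<0$. For $d\ge 0$ its construction is your Step 1 with $t=p_U(2)$; for $d<0$ it proves nothing.

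What is actually true is your fallback. By Step 1 one may assume $p_U(1)p_U(2)=0$, and the two branches must then be compared. The paper itself does this in Part II of the proof of Theorem~\ref{Oneshot_Bernoulli_radom_IRC_classifiction}.

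The optimal value of (\ref{MEC_B_R}) is $H_b(q_Y)-\min\{F(d_+),F(d_-)\}$, where $d_+=\min\{R/H_b(q_X),\,q_Y/q_X,\,(1-q_Y)/(1-q_X)\}$ and $d_-=-\min\{R/H_b(q_X),\,q_Y/(1-q_X)\}$. In particular, the closed form of Theorem~\ref{Bernoulli_radom_IRC} keeps only the $p_U(2)=0$ branch. It therefore underestimates the optimum in the low-rate regime above, as the numerical example already shows.
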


\begin{proof}
Let \(\bigl(p_U(1),p_U(2),p_U(3),p_U(4)\bigr)\) be any feasible point, and define
\begin{align}
\tilde p_U(1) = p_U(1)-p_U(2), \;\; 
\tilde p_U(2) = 0.
\label{eq:mecbr_tilde12}
\end{align}
Next define \(\tilde p_U(4)\) and \(\tilde p_U(3)\) from the constraints:
\begin{align*}
\tilde p_U(4) &= q_Y-q_X\tilde p_U(1)-(1-q_X)\tilde p_U(2), \label{eq:mecbr_tilde4_def}\\
\tilde p_U(3) &= 1-\tilde p_U(1)-\tilde p_U(2)-\tilde p_U(4). 
\end{align*}
Using \eqref{eq:mecbr_tilde12}, we obtain
\begin{align*}
\tilde p_U(4)
&= q_Y-q_X\bigl(p_U(1)-p_U(2)\bigr) \\
&= \bigl(q_Y-q_Xp_U(1)-(1-q_X)p_U(2)\bigr)+p_U(2) \\
&= p_U(4)+p_U(2) \ge 0,\\
\tilde p_U(3)
&= 1-\bigl(p_U(1)-p_U(2)\bigr)-\tilde p_U(4)
 = 1-p_U(1)-p_U(4) \\
&= p_U(2)+p_U(3) \ge 0.
\end{align*}
Moreover,
\[
\tilde p_U(1)+\tilde p_U(2)=p_U(1)-p_U(2)\le p_U(1)+p_U(2),
\]
so the rate constraint remains satisfied:
\[
H_b(q_X)\bigl(\tilde p_U(1)+\tilde p_U(2)\bigr)
\le H_b(q_X)\bigl(p_U(1)+p_U(2)\bigr)\le R.
\]
Finally, $\tilde p_U(1)-\tilde p_U(2)=p_U(1)-p_U(2)=d,$ so the transformed point attains the same objective value, since the objective depends only on \(d\) through \eqref{eq:mecbr_F_def}. Therefore, every feasible point can be replaced by another feasible point with the same objective value and with \(p_U(2)=0\). Hence, there exists an optimal solution satisfying $p_U^\star(2)=0.$    
\end{proof}

Then, \(d=p_U(1)\ge 0\), and
\begin{align*}
p_U(4) &= q_Y-q_Xp_U(1), \\
p_U(3) &= 1-q_Y-(1-q_X)p_U(1). 
\end{align*}

The rate constraint becomes $p_U(1) \le \frac{R}{H_b(q_X)}.$
The nonnegativity constraints \(p_U(4)\ge 0\) and \(p_U(3)\ge 0\) imply $p_U(1) \le \frac{q_Y}{q_X}$ and $p_U(1) \le \frac{1-q_Y}{1-q_X}.$
Therefore, the largest feasible value of \(p_U(1)\) is $
\min\left\{
\frac{R}{H_b(q_X)},
\frac{q_Y}{q_X},
\frac{1-q_Y}{1-q_X}
\right\}.$
 
\begin{itemize}
    \item If \(q_Y\le q_X\), then $q_Y(1-q_X)\le q_X(1-q_Y),$ which is equivalent to $\frac{q_Y}{q_X}\le \frac{1-q_Y}{1-q_X}.$ Hence, $\min\left\{\frac{q_Y}{q_X},\frac{1-q_Y}{1-q_X}\right\} = \frac{q_Y}{q_X}.$ 

    \item If \(q_Y\ge q_X\), then $q_Y(1-q_X)\ge q_X(1-q_Y),$ which is equivalent to $\frac{q_Y}{q_X}\ge \frac{1-q_Y}{1-q_X}.$ Therefore, $\min\left\{\frac{q_Y}{q_X},\frac{1-q_Y}{1-q_X}\right\} = \frac{1-q_Y}{1-q_X}.$
\end{itemize}

Therefore, the optimal values are given by
\begin{align*}
p_U^\star(1) &= \alpha, &
p_U^\star(2) &= 0, \nonumber\\
p_U^\star(3) &= 1-q_Y-(1-q_X)\alpha, &
p_U^\star(4) &= q_Y-q_X\alpha.
\end{align*}

\begin{figure*}
\begin{align}
\label{MEC_B_R_C}
\mathcal{I}_\text{MEC-B-R-C}^{(B)}(q_X, q_Y, R, C)
&= \max_{p_U(1),\, p_U(2),\, p_U(3),\, p_U(4)} \; H_b(q_Y) - (1-q_X)H_b\bigl(p_U(2)+p_U(4)\bigr) - q_X H_b\bigl(p_U(1)+p_U(4)\bigr) \\
\textnormal{s.t.} \quad 
& H_b(q_X)\,(p_U(1)+p_U(2)) \leq R, \label{rate_common_constraint_1_DRC} \\
& q_X p_U(1) + (1-q_X)p_U(2) + p_U(4) = q_Y, \label{rate_common_constraint_2_DRC} \\
& (p_U(1)+p_U(2)) H_b(q_{S_1}) + (p_U(3)+p_U(4)) H_b(m) \leq C, \label{rate_common_constraint_3_DRC}\\
& p_U(1)+p_U(2)+p_U(3)+p_U(4) = 1, \label{rate_common_constraint_4_DRC}\\
& p_U(1),\, p_U(2),\, p_U(3),\, p_U(4) \geq 0. \label{rate_common_constraint_5_DRC}
\end{align}   
\noindent\rule{\textwidth}{0.6pt}
\end{figure*}

And, 
\begin{align*}
\mathcal{I}_\text{MEC-B-R}^{(B)}(q_X,q_Y,R)
&=
H_b(q_Y) - (1-q_X)H_b\bigl(q_Y-q_X\alpha\bigr)
\\
&-
q_XH_b\bigl(q_Y+(1-q_X)\alpha\bigr),
\end{align*}
where 
\begin{align*}
\alpha
=
\begin{cases}
\min\left\{\dfrac{R}{H_b(q_X)},\, \dfrac{q_Y}{q_X}\right\},
& q_Y \le q_X,\\[2ex]
\min\left\{\dfrac{R}{H_b(q_X)},\, \dfrac{1-q_Y}{1-q_X}\right\},
& q_Y \ge q_X.
\end{cases}
\end{align*}
This completes the proof.

\subsection{Proof of Theorem~\ref{Oneshot_Bernoulli_radom_IRC_classifiction}}
\label{app:Oneshot_Bernoulli_radom_IRC_classifiction_proof}

Follows the Appendix~\ref{app:Bernoulli_radom_IRC_proof}, since $H(Y|U,X)=0$, the reconstruction $Y$ is a deterministic function of $(X,U)$, i.e., $Y=f(X,U)$, and the problem~\eqref{prob:oneshot_random_def_classification} reduces to optimizing over $p_U$:
\begin{align*}
    \mathcal{I}_\text{MEC-B-R-C}^{(B)}(q_X, q_Y, R, C)
    &= \max_{p_U} \quad I(X;Y) \\
    \textnormal{s.t.} \quad 
    & H(Y|U,X) = 0, \\
    & I(X;U) = 0, \\
    & I(X;Y|U) \leq R, \\
    & H(S|Y) \leq C.
\end{align*}

As in the $\mathcal{I}_\text{MEC-B-R}^{(B)}(q_X, q_Y, R)$ case, it suffices to consider $|\mathcal{U}|\le 4$, corresponding to the four deterministic mappings $f_u:\{0,1\}\to\{0,1\}$. 
Thus, all expressions for $I(X;Y|U)$, $P(Y=1)$, and $H(Y|X)$ remain unchanged, yielding
\begin{align*}
I(X;Y)
&= H_b(q_Y) - (1-q_X)H_b\bigl(p_U(2)+p_U(4)\bigr) \\
&- q_X H_b\bigl(p_U(1)+p_U(4)\bigr).
\end{align*}

The rate and marginal constraints are identical to the previous case. The key difference lies in the classification constraint.
By the data-processing inequality \cite{cover1999elements} and the Markov chain $S \leftrightarrow X \leftrightarrow Y$,
\begin{align*}
H(S|Y) \;\geq\; H(S|X) = H(X \oplus S_1 | X) = H(S_1) = H_b(q_{S_1}),
\end{align*}
and therefore feasibility requires $C \ge H_b(q_{S_1})$.
We evaluate the classification loss under each mapping:
\begin{itemize}
    \item For $U=1$ or $U=2$: $Y$ is a bijective function of $X$, hence $H(S|Y,U=u)=H(S|X)=H_b(q_{S_1}).$
    \item For $U=3$ or $U=4$: $Y$ is constant, and thus independent of $X$. Hence, 
    \begin{align*}
    P(S=0) = (1-q_X)(1-q_{S_1}) + q_X q_{S_1}.    
    \end{align*}
    And, 
    \begin{align*}
    H(S|Y,U=u) =H_b\!\big((1-q_X)(1-q_{S_1}) + q_X q_{S_1}\big).
    \end{align*}
\end{itemize}

Combining these, we obtain
\begin{align*}
H(S | Y) 
&= \sum_{u \in \mathcal{U}} p_U(u)\, H(S | f_u(X)) \\
&= (p_U(1)+p_U(2)) H_b(q_{S_1}) \\
&+ (p_U(3)+p_U(4)) H_b\!\big((1-q_X)(1-q_{S_1}) + q_X q_{S_1}\big).
\end{align*}

Let $m = (1-q_X)(1-q_{S_1}) + q_X q_{S_1}$, so that
\begin{align*}
H(S | Y)
&= (p_U(1)+p_U(2)) H_b(q_{S_1}) \\
&+ (p_U(3)+p_U(4)) H_b(m).
\end{align*}

The final optimization problem is represented as (\ref{MEC_B_R_C}). 
We can also write
\begin{align*}
p_U(3) &= (1-q_Y) - (1-q_X) [p_U(1)+p_U(2)] \\
&- (2q_X-1)\,p_U(2). \\
p_U(4) &= q_Y - q_X [p_U(1)+p_U(2)] + (2q_X-1)\,p_U(2),
\end{align*}

Let also $d = p_U(1)-p_U(2).$
Similar to the proof in Appendix~\ref{app:Bernoulli_radom_IRC_proof}, we obtain
\begin{align*}
p_U(2)+p_U(4) &= q_Y-q_Xd, \nonumber\\
p_U(1)+p_U(4) &= q_Y+(1-q_X)d.
\end{align*}
Hence,
\begin{align*}
&F(d) = (1-q_X)H_b\bigl(p_U(2)+p_U(4)\bigr) + q_X H_b\bigl(p_U(1)+p_U(4)\bigr) \\
&=
(1-q_X)H_b(q_Y-q_Xd) 
+ q_XH_b(q_Y+(1-q_X)d).
\end{align*}

Therefore, the objective function in~(\ref{MEC_B_R_C}) depends only on \(d\).
From the proof of Theorem~\ref{Bernoulli_radom_IRC}, it also follows that
\(F(d)\) is nonincreasing for \(d\ge 0\) and nondecreasing for \(d<0\).
We therefore consider the following two cases. \\

\noindent
\textbf{PART I: \(d\ge 0\) or $p_U(1) \geq p_U(2)$.}

Since \(F(d)\) is nonincreasing for \(d\ge 0\), minimizing \(F(d)\) over the
feasible nonnegative values of \(d\) is equivalent to selecting the largest
feasible value of \(d\). Thus, the minimum in this case is attained when
$d = p_U(1) > 0$ and $p_U(2)=0.$ Hence,
\begin{align*}
p_U(3)&=1-q_Y-(1-q_X)p_U(1), \\
p_U(4)&=q_Y-q_Xp_U(1).
\end{align*}
This branch is feasible only if
\[
p_U(1)
\le
\min\left\{
\frac{q_Y}{q_X},
\frac{1-q_Y}{1-q_X}
\right\}.
\]

An analytical solution to~(\ref{MEC_B_R_C}) can be obtained using the
Karush--Kuhn--Tucker (KKT) conditions~\cite{boyd2004convex}. We characterize
the solution by considering all possible active/inactive combinations of the
rate and classification constraints.

\textbf{Case 1.} Constraint~\eqref{rate_common_constraint_1_DRC} is active and constraint~\eqref{rate_common_constraint_3_DRC} is inactive.  

Using $p_U(2) = 0$ and the fact that~\eqref{rate_common_constraint_1_DRC} is active, we obtain
\begin{align*}
    &R = H_b(q_X)(p_U(1) + p_U(2)) = H_b(q_X)\,p_U(1) \\
    & p_U(1) = \frac{R}{H_b(q_X)}.
\end{align*}
Therefore,
\begin{align*}
&\frac{R}{H_b(q_X)}
\le
\min\left\{
\frac{q_Y}{q_X},
\frac{1-q_Y}{1-q_X}
\right\} \\
& R \leq H_b(q_X) \min\left\{
    \frac{q_Y}{q_X},
    \frac{1-q_Y}{1-q_X}
    \right\}.   
\end{align*}
Moreover,
\begin{align*}
    &\mathcal{I}_\text{MEC-B-R-C}^{(B)}(q_X, q_Y, R, C) \\
    &\!=\! H_b(q_Y) - (1-q_X)H_b \left[q_Y-q_X\frac{R}{H_b(q_X)} \right] \\
    &- q_XH_b\left[q_Y+(1-q_X)\frac{R}{H_b(q_X)}\right]. 
\end{align*}
The corresponding optimizer is
\begin{align*}
p_U^\star(1) &= \tfrac{R}{H_b(q_X)}, \\
p_U^\star(2) &= 0, \\
p_U^\star(3) &= \tfrac{-(1 - q_X) R}{H_b(q_X)} + 1 - q_Y, \\
p_U^\star(4) &= \tfrac{-q_X R}{H_b(q_X)} + q_Y.
\end{align*}
The classification constraint~\eqref{rate_common_constraint_3_DRC} is inactive when
\begin{align*}
&(p_U(1) + p_U(2)) H_b(q_{S_1}) + (p_U(3) + p_U(4)) H_b(m) < C, \\
&C > \frac{R (H_b(q_{S_1}) - H_b(m))}{H_b(q_X)}  + H_b(m).
\end{align*}

\textbf{Case 2.} Constraint~\eqref{rate_common_constraint_3_DRC} is active and constraint~\eqref{rate_common_constraint_1_DRC} is inactive.  

The constraint~\eqref{rate_common_constraint_3_DRC} is active if
\begin{align*}
(p_U(1) + p_U(2)) H_b(q_{S_1}) + (p_U(3) + p_U(4)) H_b(m) = C.
\end{align*}
Using constraint~\eqref{rate_common_constraint_4_DRC} and $p_U(2) = 0$, we have
\begin{align*}
&(p_U(1) + p_U(2)) H_b(q_{S_1}) + (1 - p_U(1) - p_U(2)) H_b(m) = C, \\ 
&p_U(1) = \frac{C - H_b(m)}{H_b(q_{S_1}) - H_b(m)}.
\end{align*}
Hence,
\begin{align*}
&\frac{C-H_b(m)}{H_b(q_{S_1})-H_b(m)}
\le
\min\left\{
\frac{q_Y}{q_X},
\frac{1-q_Y}{1-q_X}
\right\} \\ 
& C \geq H_b(m) - (H_b(m) - H_b(q_{S_1}))\min\left\{
    \frac{q_Y}{q_X},
    \frac{1-q_Y}{1-q_X}
    \right\}.
\end{align*}

We first state the following auxiliary fact, which will be used below.

\begin{lemma}\label{lem:entropy_order}
We have
\[
H_b(m)\;\ge\; H_b(q_{S_1}),
\]
with equality if and only if $q_X\in\{0,1\}$ or $q_{S_1}=\tfrac12$.
\end{lemma}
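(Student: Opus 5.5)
We need $H_b(m)\ge H_b(q_{S_1})$ with $m=(1-q_X)(1-q_{S_1})+q_Xq_{S_1}$. The plan is to express $m$ as a binary convolution and reduce the claim to a distance-from-$\tfrac12$ comparison. Observe that $1-m = q_X(1-q_{S_1})+(1-q_X)q_{S_1} = q_X\star q_{S_1}$, which is exactly $q_S$. Since $H_b(m)=H_b(1-m)$, it suffices to show $H_b(q_X\star q_{S_1})\ge H_b(q_{S_1})$. Conceptually this is the data-processing statement $H(S)\ge H(S|X)=H(S_1)$, valid because $S_1$ is independent of $X$. I would give the direct elementary argument, since the equality characterization is easiest that way.

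The key identity is
\[
\tfrac12 - m \;=\; \tfrac12 - (1-q_X)(1-q_{S_1}) - q_Xq_{S_1} \;=\; -(1-2q_X)\bigl(\tfrac12-q_{S_1}\bigr),
\]
which follows by expanding both sides. Hence
\[
\bigl|m-\tfrac12\bigr| = |1-2q_X|\,\bigl|q_{S_1}-\tfrac12\bigr| \le \bigl|q_{S_1}-\tfrac12\bigr|,
\]
because $q_X\in[0,1]$ implies $|1-2q_X|\le 1$.

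The binary entropy $H_b$ is symmetric about $\tfrac12$ and strictly decreasing in the distance $|t-\tfrac12|$ on $[0,1]$, by strict concavity together with symmetry. Therefore $H_b(m)\ge H_b(q_{S_1})$.

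Equality holds if and only if $|m-\tfrac12|=|q_{S_1}-\tfrac12|$. This happens exactly when either $|1-2q_X|=1$, i.e.\ $q_X\in\{0,1\}$, or $q_{S_1}=\tfrac12$, in which case both distances are zero.

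The only mildly delicate point is the strictness of the monotonicity, which is needed for the "only if" direction of the equality case. I would cite $H_b'(t)=\log\frac{1-t}{t}$, which is strictly positive on $(0,\tfrac12)$; the symmetry of $H_b$ then handles $t>\tfrac12$.
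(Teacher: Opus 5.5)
Your proof is correct and follows the paper's argument: the paper likewise writes $m-\tfrac12=(q_X-\tfrac12)(2q_{S_1}-1)$, bounds $|m-\tfrac12|\le|q_{S_1}-\tfrac12|$, and concludes from the strict monotonicity of $H_b$ in the distance from $\tfrac12$, with the same equality characterization. Your side remark that $1-m=q_S$, which makes the lemma just $H(S)\ge H(S|X)=H_b(q_{S_1})$, is a useful conceptual check that the paper does not state explicitly.
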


\begin{proof}[Proof of Lemma \ref{lem:entropy_order}]
The expression for $m$ can be rewritten as
\begin{align*}
&m = (1-q_X)(1-q_{S_1}) + q_X q_{S_1} 
= \tfrac{1}{2} + \big(q_X - \tfrac{1}{2}\big)\big(2q_{S_1}-1\big),\\
&m-\tfrac12  = \bigl(q_X-\tfrac12\bigr)\bigl(2q_{S_1}-1\bigr).
\end{align*}
Therefore,
\begin{align*}
\bigl|m-\tfrac12\bigr|
\;=\; 2\,\bigl|q_X-\tfrac12\bigr|\,\bigl|q_{S_1}-\tfrac12\bigr|
\;\le\; \bigl|q_{S_1}-\tfrac12\bigr|,
\end{align*}
since $|q_X-\tfrac12|\le \tfrac12$ for $q_X\in[0,1]$.

Because the binary entropy function is maximized at $\tfrac12$ and decreases
strictly as $|p-\tfrac12|$ increases, the above inequality implies
$H_b(m)\ge H_b(q_{S_1})$. Equality holds if and only if
$|q_X-\tfrac12|=\tfrac12$, i.e., $q_X\in\{0,1\}$, or
$|q_{S_1}-\tfrac12|=0$, i.e., $q_{S_1}=\tfrac12$.
\end{proof}

Lemma~\ref{lem:entropy_order} justifies the sign of denominators involving
$H_b(m)-H_b(q_{S_1})$. Thus, $H_b(m) \geq C \geq H_b(q_{S_1})$.
Furthermore,
\begin{align*}
    &\mathcal{I}_\text{MEC-B-R-C}^{(B)}(q_X, q_Y, R, C) \\
    & \!=\! H_b(q_Y) - (1-q_X)H_b\left[q_Y-q_X\frac{C - H_b(m)}{H_b(q_{S_1}) - H_b(m)}\right] \\
    &- q_XH_b\left[q_Y+(1-q_X)\frac{C - H_b(m)}{H_b(q_{S_1}) - H_b(m)}\right]. 
\end{align*}

The corresponding optimizer is
\begin{align*}
p_U^\star(1) &= \tfrac{C - H_b(m)}{H_b(q_{S_1}) - H_b(m)}, \\
p_U^\star(2) &= 0, \\
p_U^\star(3) &= \tfrac{-(1 - q_X)(C - H_b(m))}{H_b(q_{S_1}) - H_b(m)} + 1 - q_Y, \\
p_U^\star(4) &= \tfrac{-q_X(C - H_b(m))}{H_b(q_{S_1}) - H_b(m)} + q_Y.
\end{align*}

The rate constraint~\eqref{rate_common_constraint_1_DRC} is inactive when
\begin{align*}
&H_b(q_X)(p_U(1) + p_U(2)) < R \\
& C < \frac{R (H_b(q_{S_1}) - H_b(m))}{H_b(q_X)} + H_b(m).
\end{align*}

\textbf{Case 3.} Both constraints~\eqref{rate_common_constraint_1_DRC} and~\eqref{rate_common_constraint_3_DRC} are active.  

From Case 2, the classification constraint~\eqref{rate_common_constraint_3_DRC} is active when
\[
p_U(1) + p_U(2) = \frac{C - H_b(m)}{H_b(q_{S_1}) - H_b(m)}.
\]
In this case,
\begin{align*}
    &\mathcal{I}_\text{MEC-B-R-C}^{(B)}(q_X, q_Y, R, C) \\
    & \!=\! H_b(q_Y) - (1-q_X)H_b\left[q_Y-q_X\frac{C - H_b(m)}{H_b(q_{S_1}) - H_b(m)}\right] \\
    &- q_XH_b\left[q_Y+(1-q_X)\frac{C - H_b(m)}{H_b(q_{S_1}) - H_b(m)}\right].
\end{align*}

The rate constraint~\eqref{rate_common_constraint_1_DRC} is active if
\[
C = \frac{R (H_b(q_{S_1}) - H_b(m))}{H_b(q_X)} + H_b(m).
\]

\textbf{Case 4.} Both constraints~\eqref{rate_common_constraint_1_DRC} and~\eqref{rate_common_constraint_3_DRC} are inactive.  

When \( C > H_b(q_S) \), the classification constraint~\eqref{rate_common_constraint_3_DRC} is inactive. If, in addition, the rate is sufficiently large, namely \( R > H_b(q_X) \), then the rate constraint~\eqref{rate_common_constraint_1_DRC} is also inactive. The problem then reduces to the unconstrained maximum-information coupling between
\(X\sim\mathrm{Bern}(q_X)\) and \(Y\sim\mathrm{Bern}(q_Y)\) under the prescribed marginals.

Let $\theta = P(X=1,Y=1).$ Since \(X\sim\mathrm{Bern}(q_X)\) and \(Y\sim\mathrm{Bern}(q_Y)\), the joint probabilities are
\begin{align*}
P(X=1,Y=1) &= \theta,\\
P(X=1,Y=0) &= q_X-\theta,\\
P(X=0,Y=1) &= q_Y-\theta,\\
P(X=0,Y=0) &= 1-q_X-q_Y+\theta.
\end{align*}
Here, $\theta$ satisfies the Fréchet--Hoeffding bounds~\cite[Sec.~2.5]{Nelsen2006},
\[
\max\{0,q_X+q_Y-1\}\le \theta\le \min\{q_X,q_Y\}.
\]
Moreover,
\[
H(Y|X)
=
(1-q_X)H_b\left(\frac{q_Y-\theta}{1-q_X}\right)
+
q_XH_b\left(\frac{\theta}{q_X}\right).
\]
For \(0<q_X,q_Y\le \frac12\), the minimum of $H(Y|X)$ is attained at the
positive-dependence endpoint $\theta^*=\min\{q_X,q_Y\}$~\cite{embrechts2001modelling}.
Therefore,
\begin{align*}
&\mathcal{I}_\text{MEC-B-R-C}^{(B)}(q_X,q_Y,R,C) \\
&= H_b(q_Y)- (1-q_X)H_b\left(\frac{q_Y- \min\{q_X,q_Y\}}{1-q_X}\right) \\
&-
q_XH_b\left(\frac{\min\{q_X,q_Y\}}{q_X}\right) \\
&=
\begin{cases}
\displaystyle H_b(q_Y)-q_XH_b\left(\frac{q_Y}{q_X}\right), \;\;
 q_Y\le q_X\\
\displaystyle H_b(q_Y) -(1-q_X)H_b\left(\frac{q_Y-q_X}{1-q_X}\right), \;\;
 q_Y\ge q_X.
\end{cases}    
\end{align*}

\noindent
\textbf{PART II: \(d<0\) or $p_U(1) < p_U(2)$.}

Since \(F(d)\) is nondecreasing for \(d<0\), minimizing \(F(d)\) over the
feasible negative values of \(d\) is equivalent to selecting the smallest
feasible value of \(d\). Hence, the minimum in this case is attained when
$d = -p_U(2) < 0$ and $p_U(1)=0$. Then,
\begin{align*}
p_U(3)=1-q_Y-q_Xp_U(2), \\
p_U(4)=q_Y-(1-q_X)p_U(2).
\end{align*}
Therefore, this branch is feasible only if
\[
p_U(2)
\le
\min\left\{
\frac{q_Y}{1-q_X},
\frac{1-q_Y}{q_X}
\right\}.
\]

We again consider all possible active/inactive combinations of the rate and
classification constraints.

\textbf{Case 1.} Constraint~\eqref{rate_common_constraint_1_DRC} is active and constraint~\eqref{rate_common_constraint_3_DRC} is inactive.  

Using $p_U(1) = 0$ and the fact that~\eqref{rate_common_constraint_1_DRC} is active, we obtain
\begin{align*}
    &R  = H_b(q_X)(p_U(1) + p_U(2)) = H_b(q_X)\,p_U(2)  \\
    & p_U(2) = \frac{R}{H_b(q_X)}.
\end{align*}
Thus,
\begin{align*}
&\frac{R}{H_b(q_X)}
\le
\min\left\{
\frac{q_Y}{1-q_X},
\frac{1-q_Y}{q_X}
\right\} \\
& R \leq H_b(q_X) \min\left\{
    \frac{q_Y}{1-q_X},
    \frac{1-q_Y}{q_X}
    \right\}.
\end{align*}
Moreover,
\begin{align*}
    &\mathcal{I}_\text{MEC-B-R-C}^{(B)}(q_X, q_Y, R, C) \\
    & \!=\! H_b(q_Y) - (1-q_X)H_b\left[q_Y + q_X\frac{R}{H_b(q_X)}\right] \\
    &- q_XH_b\left[q_Y - (1-q_X)\frac{R}{H_b(q_X)}\right]. 
\end{align*}
The corresponding optimizer is
\begin{align*}
p_U^\star(1) &= 0, \\
p_U^\star(2) &= \tfrac{R}{H_b(q_X)}, \\
p_U^\star(3) &= -\tfrac{q_X R}{H_b(q_X)} + 1 - q_Y, \\
p_U^\star(4) &= \tfrac{(q_X - 1) R}{H_b(q_X)} + q_Y.
\end{align*}
The classification constraint~\eqref{rate_common_constraint_3_DRC} is inactive when
\begin{align*}
&(p_U(1) + p_U(2)) H_b(q_{S_1}) + (p_U(3) + p_U(4)) H_b(m) < C, \\
&C > \frac{R (H_b(q_{S_1}) - H_b(m))}{H_b(q_X)}  + H_b(m).
\end{align*}

\textbf{Case 2.} Constraint~\eqref{rate_common_constraint_3_DRC} is active and constraint~\eqref{rate_common_constraint_1_DRC} is inactive.  

The constraint~\eqref{rate_common_constraint_3_DRC} is active if
\begin{align*}
(p_U(1) + p_U(2)) H_b(q_{S_1}) + (p_U(3) + p_U(4)) H_b(m) = C.
\end{align*}
Using constraint~\eqref{rate_common_constraint_4_DRC} and $p_U(1) = 0$, we have
\begin{align*}
&(p_U(1) + p_U(2)) H_b(q_{S_1}) + (1 - p_U(1) - p_U(2)) H_b(m) = C, \\ 
&p_U(1) + p_U(2) = \frac{C - H_b(m)}{H_b(q_{S_1}) - H_b(m)}.
\end{align*}
Therefore,
\begin{align*}
&\frac{C-H_b(m)}{H_b(q_{S_1})-H_b(m)}
\le
\min\left\{
\frac{q_Y}{1-q_X},
\frac{1-q_Y}{q_X}
\right\} \\
& C \geq H_b(m) - (H_b(m) - H_b(q_{S_1}))\min\left\{
    \frac{q_Y}{1-q_X},
    \frac{1-q_Y}{q_X}
    \right\}.
\end{align*}

It follows that
\begin{align*}
    &\mathcal{I}_\text{MEC-B-R-C}^{(B)}(q_X, q_Y, R, C) \\
    & \!=\! H_b(q_Y) - (1-q_X)H_b\left[q_Y + q_X\frac{C - H_b(m)}{H_b(q_{S_1}) - H_b(m)}\right] \\
    &- q_XH_b\left[q_Y - (1-q_X)\frac{C - H_b(m)}{H_b(q_{S_1}) - H_b(m)}\right]. 
\end{align*}
The corresponding optimizer is
\begin{align*}
p_U^\star(1) &= 0, \\
p_U^\star(2) &= \tfrac{C - H_b(m)}{H_b(q_{S_1}) - H_b(m)}, \\
p_U^\star(3) &= \tfrac{-q_X(C - H_b(m))}{H_b(q_{S_1}) - H_b(m)} + 1 - q_Y, \\
p_U^\star(4) &= \tfrac{(q_X - 1)(C - H_b(m))}{H_b(q_{S_1}) - H_b(m)} + q_Y.
\end{align*}

The rate constraint~\eqref{rate_common_constraint_1_DRC} is inactive when
\begin{align*}
&H_b(q_X)(p_U(1) + p_U(2)) < R \\
& C < \frac{R (H_b(q_{S_1}) - H_b(m))}{H_b(q_X)} + H_b(m).
\end{align*}

\textbf{Case 3.} Both constraints~\eqref{rate_common_constraint_1_DRC} and~\eqref{rate_common_constraint_3_DRC} are active.  

From Case 2, the classification constraint~\eqref{rate_common_constraint_3_DRC} is active when
\[
p_U(1) + p_U(2) = \frac{C - H_b(m)}{H_b(q_{S_1}) - H_b(m)}.
\]
In this case,
\begin{align*}
    &\mathcal{I}_\text{MEC-B-R-C}^{(B)}(q_X, q_Y, R, C) \\
    & \!=\! H_b(q_Y) - (1-q_X)H_b\left[q_Y + q_X\frac{C - H_b(m)}{H_b(q_{S_1}) - H_b(m)}\right] \\
    &- q_XH_b\left[q_Y - (1-q_X)\frac{C - H_b(m)}{H_b(q_{S_1}) - H_b(m)}\right]. 
\end{align*}

The rate constraint~\eqref{rate_common_constraint_1_DRC} is active if
\[
C = \frac{R (H_b(q_{S_1}) - H_b(m))}{H_b(q_X)} + H_b(m).
\]

\textbf{Case 4.} Both constraints~\eqref{rate_common_constraint_1_DRC} and~\eqref{rate_common_constraint_3_DRC} are inactive.  

When \( C > H_b(q_S) \), the classification constraint~\eqref{rate_common_constraint_3_DRC} is inactive. If the rate is also sufficiently large, namely \( R > H_b(q_X) \), then the rate constraint~\eqref{rate_common_constraint_1_DRC} is inactive as well. Similar to Part I, the maximum achievable objective value in this case is
\begin{align*}
&\mathcal{I}_\text{MEC-B-R-C}^{(B)}(q_X,q_Y,R,C) \\
&= (1-q_X)H_b\left(\frac{q_Y- \min\{q_X,q_Y\}}{1-q_X}\right) \\
&+
q_XH_b\left(\frac{\min\{q_X,q_Y\}}{q_X}\right).
\end{align*}

Combining the cases in Parts I and II gives the closed-form expression for
\(\mathcal{I}_\text{MEC-B-R-C}^{(B)}(q_X, q_Y, R, C)\), as stated in
Theorem~\ref{Oneshot_Bernoulli_radom_IRC_classifiction}.

\end{document}